\documentclass[10pt,conference,a4paper]{IEEEtran}

\pdfpageattr{/Group << /S /Transparency /I true /CS /DeviceRGB >>}


\usepackage[utf8]{inputenc}
\usepackage[T1]{fontenc}
\usepackage{textcomp}
\usepackage{microtype}
\usepackage{calc}
\usepackage[normalem]{ulem}

\overfullrule=5pt

\usepackage{lipsum}

\usepackage[range-phrase=--,per-mode=symbol-or-fraction,binary-units=true,range-units=single,list-units=single,detect-all,forbid-literal-units=true]{siunitx}
\AtBeginDocument{
\DeclareSIUnit\bit{bit}
\DeclareSIUnit\byte{Byte}
\DeclareSIUnit\decibelm{dBm}
\DeclareSIUnit\vehicle{veh}
}

\usepackage{silence}
\WarningsOff[latexfont]
\WarningsOff[todonotes]
\WarningFilter{biblatex}{Patching footnotes failed}

\usepackage{csquotes}

\usepackage{amsmath}
\usepackage{amssymb}
\usepackage{amsfonts}
\usepackage{amsthm}

\usepackage{mathtools}

\usepackage{booktabs}
\usepackage{url}

\usepackage[inline]{enumitem}

\usepackage[caption=false,font=footnotesize]{subfig}
\usepackage[pdftex]{graphicx}
\DeclareGraphicsExtensions{.pdf,.png,.jpg}
\pdfminorversion=6
\pdfsuppresswarningpagegroup=1 

\usepackage{tikz}
\usetikzlibrary{arrows}
\usetikzlibrary{calc}
\usetikzlibrary{chains}
\usetikzlibrary{scopes}

\usepackage[american]{babel}
\hyphenation{an-o-nym-i-ty pseu-do-nym pseu-do-nyms}
\usepackage{hyphenat}

\usepackage{algorithmic}
\usepackage{algorithm}

\floatplacement{algorithm}{bp}

\usepackage[capitalize,noabbrev,nameinlink]{cleveref}


\RequirePackage{xstring}
\RequirePackage{xparse}
\RequirePackage[]{acro}
\makeatletter
\@ifpackagelater{acro}{2019/02/17}{
	\NewDocumentCommand\acrodef{mO{#1}mG{}}{\DeclareAcronym{#1}{short={#2}, long={#3}, foreign-plural={}, #4}}
}{
	\NewDocumentCommand\acrodef{mO{#1}mG{}}{\DeclareAcronym{#1}{short={#2}, long={#3}, #4}}
}
\makeatother
\acrodef{AIFS}{Arbitration Inter-Frame Spacing}
\acrodef{AoI}{Age of Information}
\acrodef{AWGN}{Additive White Gaussian Noise}
\acrodef{BS}{Base Station}
\acrodef{BSS}{Basic Service Set}
\acrodef{CAM}{Cooperative Awareness packet}
\acrodef{CPM}{Collective Perception packet}
\acrodef{CBF}{Contention-Based Forwarding}
\acrodef{CBR}{Channel Busy Ratio}
\acrodef{CDF}{Cumulative Distribution Function}
\acrodef{CCDF}{Complementary Cumulative Distribution Function}
\acrodef{CDMA}{Code Division Multiple Access}
\acrodef{CCH}{Control Channel}
\acrodef{CSMA}{Carrier-Sense Multiple Access}
\acrodef{C-ITS}{Cooperative Intelligent Transportation System}{short-plural-form={C-ITS}}
\acrodef{DENM}{Decentralized Environmental Notification packet}
\acrodef{DIFS}{DCF Inter-Frame Space}
\acrodef{DSRC}{Dedicated Short-Range Communication}
\acrodef{DTMC}{Discrete Time Markov Chain}
\acrodef{FCFS}{First Come First Served}
\acrodef{GFMA}{Grant Free Multiple Access}
\acrodef{i.i.d.}{independent and identically distributed}
\acrodef{IoT}{Internet of Things}
\acrodef{ITS}{Intelligent Transportation System}{long-plural-form={}}
\acrodef{IVC}{Inter Vehicle Communication}
\acrodef{LCFS}{Last Come First Served}
\acrodef{LCFSwO}{Last Come First Served with Overwrite}
\acrodef{LDM}{Local Dynamic Map}
\acrodef{MPR}{Multi-Packet Reception}
\acrodef{mMTC}{Massive machine-type Communication}
\acrodef{M2M}{Machine-to-Machine}
\acrodef{MAC}{Medium Access Control}
\acrodef{MCM}{Maneuver Coordination packet}
\acrodef{MUSA}{Multi User Shared Access}
\acrodef{NOMA}{Non-Orthogonal Multiple Access}
\acrodef{NoB}{No Buffering}
\acrodef{OBU}{On-Board Unit}
\acrodef{OMA}{Orthogonal Multiple Access}
\acrodef{PDF}{Probability Density Function}
\acrodef{RSU}{Road Side Unit}
\acrodef{SCH}{Service Channel}
\acrodef{SCOV}{Squared Coefficient Of Variation}
\acrodef{SA}{Slotted ALOHA}
\acrodef{SIC}{Successive Interference Cancellation}
\acrodef{SNR}{Signal to Noise Ratio}
\acrodef{SNIR}{Signal to Noise plus Interference Ratio}
\acrodef{PDMA}{Pattern Division Multiple Access}
\acrodef{V2I}{Vehicle-to-Infrastructure}
\acrodef{V2V}{Vehicle-to-Vehicle}
\acrodef{V2X}{Vehicle-to-Everything}
\acrodef{VANET}{Vehicular Ad Hoc Network}
\acrodef{VLC}{Visible Light Communication}
\acrodef{UAV}{Unmanned Aerial Vehicle}

\usepackage[color]{changebar}
\def\todoCtd#1{%
	TODO: #1%
	\ifx&#1&.\fi%
	\endgroup
	\cbend
	\relax
}

\NewDocumentCommand\IEEE{ s m >{\SplitArgument{4}{/}}d[] }{%
    \IfBooleanTF{#1}{}{IEEE\,}
    \nolinebreak[2]
    #2%
    \IfNoValueTF{#3}{%
    }{%
        \sommerIEEELettersSlashed#3%
    }%
}
\newcommand{\sommerIEEELettersSlashed}[5]{%
    \IfNoValueTF{#2}{%
    }{%
        \nolinebreak[3]
    }%
	#1%
	\IfNoValueTF{#2}{}{/#2}%
	\IfNoValueTF{#3}{}{/#3}%
	\IfNoValueTF{#4}{}{/#4}%
	\IfNoValueTF{#5}{}{/#5}%
}





\newtheorem{theorem}{Theorem}

\begin{document}

\title{Asymptotic analysis of sum-rate under SIC\\
\thanks{This work was partially supported by the European Union under the Italian National Recovery and Resilience Plan (NRRP) of NextGenerationEU, partnership on Ã¢ÂÂTelecommunications of the FutureÃ¢ÂÂ (PE0000001 - program Ã¢ÂÂRESTARTÃ¢ÂÂ).}
}

\author{%
\IEEEauthorblockN{%
	Andrea Baiocchi\IEEEauthorrefmark{1}%
	,
	Asmad Razzaque\IEEEauthorrefmark{1}%
}%

\IEEEauthorblockA{
	\IEEEauthorrefmark{1}Dept.\ of Information Engineering, Electronics and Telecommunications (DIET), University of Rome Sapienza, Italy
}%


%
\texttt{%
	\{andrea.baiocchi,asmadbin.razzaque\}%
	@uniroma1.it%
}%
}

\maketitle

\begin{abstract}\nohyphens{%
Limitation of the cost of coordination and contention among a large number of nodes calls for grant-free approaches, exploiting physical layer techniques to solve collisions.
\ac{SIC} is becoming a key building block of multiple access channel receiver, in an effort to support massive \ac{IoT}.
In this paper, we explore the large-scale performance of \ac{SIC} in a theoretical framework.
A general model of a \ac{SIC} receiver is stated for a shared channel with $n$ transmitters.
The asymptotic sum-rate performance is characterized as $n \rightarrow \infty$, for a suitably scaled target \ac{SNIR}.
The probability distribution of the number of correctly decoded packets is shown to tend to a deterministic distribution asymptotically for large values of $n$.
The asymptotic analysis is carried out for any probability distribution of the wireless channel gain, assuming that the average received power level is same for all nodes, through power control. 
}\end{abstract}

\begin{IEEEkeywords}
Multiple access; SIC; large network asymptotic regime; 
\end{IEEEkeywords}

\acresetall
\IEEEpeerreviewmaketitle

%

\section{Introduction}
\label{sec:introduction}


\ac{IoT} is transforming and bringing a deep impact in every aspect of human life such as healthcare, education, Industry, agrifood chain, power grid, autonomous driving, logistic services, and smart cities \cite{fortino2017modeling,bello2019toward}. 
The evolution of \ac{IoT} toward these applications opens up a paradigm of massive access that aims at realizing efficient and reliable communication for a massive number of \ac{IoT} devices \cite{zanella2013m2m,ghavimi2014m2m}.
The main characteristics of massive access include low power, massive connectivity, short packets, and low signaling overhead, which call for more robust and flexible wireless communication networks \cite{Guo2021,Chen2021}.
Also, these requirements envision the 6G ecosystem as the playground of a data-driven society, which involves instant and unlimited connectivity to anything, ranging from tiny static sensors to autonomous objects, from anywhere, and anytime.
This paradigm calls for the design of innovative multiple access techniques to provide massive connectivity, and low latency to future massive \ac{IoT} systems.

A generic scenario of massive multiple access in \ac{IoT} involves a huge number of connected devices that give rise typically to sporadic and random traffic. 
Hence, the actual number of backlogged devices, contending for access to the medium, is largely variable.
A fundamental issue toward such a massive \ac{IoT} scenario involves reducing the cost of coordination of a varying and possibly large number of devices in an efficient and timely manner.
One promising solution toward this paradigm is the grant-free random multiple access approach, in which each active device directly transmits its data to the \ac{BS} without waiting for any permission from the access point.
Traditional grant-based access requires multiple signaling exchanges between the device and the network to allocate resources, which can be inefficient for small and infrequent data transmissions, especially in \ac{mMTC}. 
Grant-free access minimizes this overhead \cite{Jabbarvaziri2021}.

The key features of grant-free random multiple access include no prior scheduling which reduces the latency and improves the system response, scalability to support a huge number of connections, energy efficiency, eliminating the need for devices to wait for a grant, and collision management.

As for the last point, since multiple devices may attempt to transmit simultaneously, leading to collisions, grant-free random access protocols often incorporate mechanisms to handle these collisions, such as coding, advanced signal processing techniques, or \ac{NOMA} \cite{LuiPetar2018}.
\ac{NOMA} enables multiple users to share the same time-frequency resources non-orthogonally \cite{Saito2013}, thus enhancing spectral efficiency and addressing the massive connectivity demands anticipated in future wireless networks \cite{Ding2014,Yaun2021}.

Several \ac{NOMA} schemes have been proposed over the past few years, which can be divided into two main categories code-domain \ac{NOMA} and power-domain \ac{NOMA} \cite{Dai2015}.
A few variants of code-domain \ac{NOMA} includes \ac{PDMA} \cite{Chen2017}, and \ac{MUSA} \cite{Yuan2016}.
In this paper, we mainly focus on the power-domain \ac{NOMA}, as recently implemented in 3GPP LTE \cite{DingChoi2017}.
It involves the superposition of signals from different users, then exploiting \ac{SIC} to decode them \cite{Saito2013}.

Recent researches highlight a significant interest in leveraging advanced signal processing and machine learning techniques to address the challenges of \ac{GFMA} in modern wireless networks \cite{DuDonh2017,Valentini2023}.
For instance, \cite{Emir2021} proposed a novel \ac{GFMA} scheme based on deep learning techniques, achieving superior user detection accuracy and connectivity.
A reinforcement learning-based approach is introduced in  \cite{Fayaz2021} to address the challenge of massive connectivity. 
An optimal power from a pool of powers is chosen, with the help multi-agent framework, to optimize spectral efficiency and user fairness while mitigating interference.
A stochastic optimization-based adaptive grant-free scheme is proposed in \cite{ITC35}, where the transmission probability and the target \ac{SNIR} is adjusted based on the number of backlogged nodes.
It is quite evident that all these performance enhancements towards adaptive grant-free protocols come from advances carried out at the physical layer, i.e. \ac{SIC} \cite{LinDai2018,Razzaque2022}.
Hence, it becomes a crucial aspect to understand the behavior of \ac{MPR}-capable receivers, like those based on \ac{SIC}, under a massive number of users.

Several works have been done toward the design, dimensioning, and optimization of scalable \ac{MPR} systems, with respect to the number of transmitting nodes in the range of thousands.
The authors in \cite{Zanella2012} proposed a theoretical framework to investigate the effect of several parameters, such as the capture threshold and the multi-packet reception capability, with the help of \ac{SIC}, with a limited number of iterations involved in \ac{SIC} decoding process.
Enabling the \ac{SIC} mechanism brings significant performance gains in terms of throughput, especially for low capture thresholds, however, the normalized throughput increment rapidly diminishes with each iteration of \ac{SIC} \cite{Zanella2012}.
\ac{SIC} is utilized under the framework of ALOHA in  \cite{Munari2023}, with varying numbers of users driven by the duration of previous contention periods, to maximize the throughput along with optimization of the age of information.
In heavily loaded systems, the optimum \ac{SNIR} profile decreases in the \ac{SIC} decoding order \cite{Josep2017}, contrasting with uniform \ac{SNIR} profiles found in previous studies \cite{Collard2014} optimizing asymptotic capacity or under different energy constraints.

There still exists a significant gap in understanding the performance of \ac{SIC} in large-scale systems.
The role of \ac{SIC} toward massive access needs great attention, especially in view of the standardization of \ac{NOMA} in 3GPP network \cite{DingChoi2017}.

In this paper, we characterize the asymptotic behavior of sum-rate in a \ac{SIC}-based, grant-free multiple access for $n$ nodes, as $n \rightarrow \infty$
The key contributions of the paper can be summarized as follows:
\begin{itemize}
    \item We characterize the asymptotic behavior of the probability distribution of the number of correctly decoded packets utilizing \ac{SIC} in large-scale systems.
    \item We highlight the scaling of target \ac{SNIR} that guarantees the asymptotically maximum achievable sum-rate as the number of nodes grows.
    This analysis is valid for any probability distribution of the channel random gain, given that the average received power is the same for all nodes.
\end{itemize}

The paper is organized as follows.
In \cref{sec:model}, we introduce the system model and provide basic definitions.
The key results are stated and discussed in \cref{sec:analysis}, with reference to the special case of Rayleigh channel fading. 
The model is generalized in \cref{sec:discussion} to any possible distribution of channel gain. 
Conclusions are drawn in \cref{sec:conclusions}.
Proof and details of mathematical developments are given in the Appendices.

\section{System Model}
\label{sec:model}

We consider $n$ nodes sharing a common communication channel and sending packets to a sink, referred to as the \ac{BS}.
The focus is on uplink, i.e., packets are sent by nodes and addressed to the \ac{BS}.

The communication link is modeled as an \ac{AWGN} channel with block Rayleigh fading.
Let $P_j$ denote the \emph{average received} power level at the \ac{BS} from node $j$.
Let also $\mathbf{S}_j = [S_j(1),\dots,S_j(\ell)]$ the symbol string transmitted by node $j$ and $h_j$ the channel gain coefficient of node $j$.
The coefficient $h_j$ accounts for Rayleigh fading, hence $|h_j|^2$ is a negative exponential random variable.
It is normalized so that $\mathrm{E}[ |h_j|^2 ] = 1$.
The received symbol string $\mathbf{R} = [R(1),\dots,R(\ell)]$  at the \ac{BS} is modeled as
\begin{equation}
\label{eq:receivedsymbol}
\mathbf{R} = \sum_{ j = 1}^{ n }{ \sqrt{P_j} h_j \mathbf{S}_j } + \mathbf{Z}
\end{equation}
where $\mathbf{Z} = [Z(1),\dots,Z(\ell)]$ is the additive Gaussian noise, and the $Z(i)$'s are \ac{i.i.d.} Gaussian random variables, i.e., $Z(i) \sim \mathcal{CN}(0,P_N)$ for $i = 1,\dots, \ell$.
`

Assimilating interference to additive Gaussian noise, the \ac{SNIR} of node $j$ is given by
\begin{equation}
\Gamma_j = \frac{ P_j |h_j|^2 }{ P_N + I_j }
\end{equation}
where $I_j$ denotes the interference power level on the reception of packet of node $j$.
Node $j$ packet is correctly decoded if its average \ac{SNIR} at the \ac{BS} is no less than a suitable threshold $\gamma$, i.e., $\Gamma_j \ge \gamma$.
The target \ac{SNIR} $\gamma$ is tied to the spectral efficiency $\eta$ (information bit per symbol) of the communication channel according to $\eta = \log_2(1+\gamma)$, consistently with the \ac{AWGN} channel assumption.

The time axis is divided in slots.
Nodes make transmission attempts in each slot.
The slot size just fits a packet transmission time.
Let $L$ be the length of transmitted packets for all nodes and $W$ the channel bandwidth.
The time required to transmit a packet for a given target \ac{SNIR} $\gamma$ is
\begin{equation}
\label{eq:slottimeduration}
T = T(\gamma) = \frac{ L }{ W \log_2(1+\gamma) }
\end{equation}

In the following, we drop the subscript $j$ referring to the considered node to discuss transmission power setting, since the same procedure applies to each node.
We assume that node $j$ estimates the long-term average path gain of its channel to the \ac{BS}, e.g., thanks to pilot tomes transmitted periodically by the \ac{BS}.
Then, node $j$  adjusts its transmitted power level so that the average received power level at the \ac{BS} matches a prescribed value $P_0$, same for all nodes.
$P_0$ is set so that the probability of failing to decode a packet sent by a single transmitting node is no more than $\epsilon$.
Decoding of a packet sent by a single transmitting node is successful, if the \ac{SNR} exceeds $\gamma$, i.e., $P_{\text{rx}} / P_N \ge \gamma$, where $P_N$ is the background noise power level. 
Since the average received power level is $P_0$, the requirements translates to
\begin{equation}
\frac{ G_f P_0 }{ P_N } \ge \gamma \qquad \text{w.p. } 1-\epsilon
\end{equation}
Given that $G_f$ has a negative exponential \ac{PDF}, $P_0$ is to be set so that $\mathcal{P}( G_f P_0 / P_N \ge \gamma ) = e^{ - \gamma P_N/P_0} = 1-\epsilon$.
Hence, the target \ac{SNR} level $S_0$ at the receiving \ac{BS} is set as follows:
\begin{equation}
\label{eq:targetaveragerxSNR}
S_0 = \frac{ P_0 }{ P_N } = \frac{ \gamma }{ -\log(1-\epsilon) } = \frac{ \gamma }{ c }
\end{equation}
where we have introduced the constant $c = -\log(1-\epsilon)$.


We assume an ideal \ac{SIC} receiver.
Let $n$ packets be received simultaneously at the \ac{BS} in one slot.
Channel gain coefficient of node $j$ is given by $|h_j|^2$ and it is a negative exponential random variable with mean 1.
We denote the (descending) order statistics of the sequence $|h_1|,\dots,|h_n|$ with $|\tilde{h}_1|,\dots,|\tilde{h}_n|$.
By definition, we have  $|\tilde{h}_1| \ge |\tilde{h}_2| \dots \ge |\tilde{h}_n|$ (ties are broken at random).
\ac{SIC} works as follows.
Provided decoding of packets $1,\dots,k-1$ be successful, packet $k$ is decoded successfully if and only if the following inequality holds\footnote{In the following, for ease of notation, it is mean that $\sum_{ i = a }^{ b }{} = 0$ and $\prod_{ i = a }^{ b }{} = 1$, if $a > b$.}:
\begin{equation}
	\label{eq:SNIRdefinition}
	\Gamma_j = \frac{ P_0 |\tilde{h}_k|^2 }{ P_N + \sum_{ r = k+1 }^{ n }{ P_0 |\tilde{h}_r|^2 } + \xi \sum_{ r = 1 }^{ k-1 }{ P_0 |\tilde{h}_r|^2 } } \ge \gamma
\end{equation}
where $\xi$ accounts for the fraction of residual interference due to imperfect cancellation.
Perfect \ac{SIC} is obtained by setting $\xi = 0$.
On the contrary, $\xi = 1$ corresponds to a receiver not endowed with \ac{SIC} capability, where multi-packet reception is possible only thanks to capture effect\footnote{In this special case, ordering is not relevant and decoding of individual packets is done independently of one another.}.
Using \cref{eq:targetaveragerxSNR}, \cref{eq:SNIRdefinition} can be re-written as follows:
\begin{equation}
	\Gamma_j = \frac{  |\tilde{h}_k|^2 }{ c/\gamma + \sum_{ r = k+1 }^{ n }{ |\tilde{h}_r|^2 } + \xi \sum_{ r = 1 }^{ k-1 }{ |\tilde{h}_r|^2 } } \ge \gamma
\end{equation}

To simplify notation, we let $Y_j = |h_j|^2$ and $Y_{(j)} = | \tilde{h}_j |^2$ for $j = 1,\dots,n$ in the following.
The $Y_j$'s are \ac{i.i.d.} negative exponential random variables with mean 1, while the $Y_{(j)}$'s form the associate descending order statistic.
Successful decoding of packet carried by the $m$-th strongest signal with \ac{SIC} requires that $\Gamma_j \ge \gamma$ for all $j = 1,\dots,m$.
Using the notation introduced so far, the condition is stated as
\begin{equation}
	\label{eq:successfulSIC}
	\Gamma_j = \frac{  Y_{(j)} }{ c/\gamma + \sum_{ r = j+1 }^{ n }{ Y_{(r)} } + \xi \sum_{ r = 1 }^{ j-1 }{ Y_{(r)} } } \ge \gamma
\end{equation}
for all $j = 1,\dots,m$.
In \cref{eq:successfulSIC} the residual interference coefficient $\xi$ is assumed to satisfy $0 \le \xi < 1$.

If \ac{SIC} is not used at the receiver, successful decoding is still possible.
Each packet is decoded independently of all others, so that ordering does not matter.
The packet carried by the $j$-th signal is successfully decoded if
\begin{equation}
	\label{eq:successfulnoSIC}
	\frac{  Y_j }{ c/\gamma + \sum_{ r = 1, r \ne j }^{ n }{ Y_r } } \ge \gamma
\end{equation}

We introduce the notation $S_j = S_0 Y_j$, $j = 1,\dots,n$, to denote the \ac{SNR} of node $j$, where $S_)$ is the average \ac{SNR} level at the receiver (see \cref{eq:targetaveragerxSNR}), while $Y_j$ accounts for the residual randomness of the communication channel.

In next Section we analyze the asymptotic regime for $n \rightarrow \infty$ of \ac{SIC} reception, under the assumption of Rayleigh fading, i.e., that $Y_j$ be \ac{i.i.d.} negative esponential random variable with mean 1.
In \cref{sec:discussion} this assumption is relaxed, by allowing the $Y_j$'s to have any probability distribution.

%

\section{Scaling of multiple access parameters with the number of nodes}
\label{sec:scaling}

Let us consider a multiple access channel with slotted time axis and a configurable target \ac{SNIR} $\gamma$, as described in the previous Section.
Nodes that are backlogged at the beginning of a slot transmit in that slot with probability $p$, or defer a new attempt in next slot with probability $1-p$.
In this Section we motivate why the target \ac{SNIR} $\gamma$ and the transmission probability $p$ must scale with the number of backlogged nodes $n$ for stability.

To measure the efficiency of the communication channel we focus on the sum-rate metric.
The sum-rate gives the achieved spectral efficiency of the multiple access channel in bit/s/Hz \cite{Razzaque2022}.
Let us assume that a backlogged node transmits in a slot with probability $p$ and sets the target \ac{SNIR} to $\gamma$ when transmitting.
Assuming an AWGN channel over a bandwidth $W$ and a time slot size $T$, the number of bits delivered with a single successful transmission is $T \, W \, \log_2(1+\gamma)$
Let also $M_n(p,\gamma)$ denote the mean number of  successfully decoded packets, when $n$ nodes are backlogged, transmit with probability $p$, use a target \ac{SNIR} $\gamma$.
Then, the achieved sum-rate with $n$ backlogged ndoes is
\begin{equation}
\label{ }
U = U_n(p,\gamma) = \log_2(1+\gamma) M_n(p,\gamma)
\end{equation}
We have 
\begin{equation}
\label{eq:sumrateforSA}
M_n(p,\gamma) =  \sum_{ k = 1 }^{ n }{ m_k(\gamma) \binom{n}{k} p^k (1-p)^{n-k} }
\end{equation}
where we let $m_k(\gamma)$ denote the mean number of correctly decoded packets in a slot with target \ac{SNIR} equal to $\gamma$, given that $k$ nodes transmit in that slot.
In Appendix \ref{app:Appboundsonmkgamma} it is proved that the following bounds hold on $m_k(\gamma)$:
\begin{equation}
\label{eq:boundsofmkappaofgamma}
\frac{ k ( 1 - \epsilon ) }{ (1+\gamma)^{k-1} } \le m_k(\gamma) \le \frac{ k^2 ( 1 - \epsilon ) }{ (1+\gamma)^{k-1} } 
\end{equation}
for all $k \ge 0$ and all $\gamma > 0$.

Exploiting those bounds, in Appendix \ref{app:AppUfixedparameters} it is shown that $U_n(p,\gamma) \rightarrow 0$ as $n \rightarrow \infty$, for any fixed values of $p$ and $\gamma$.
We must therefore consider adaptive values of $p_n$ and $\gamma_n$.

In Appendix \ref{app:scalingofpandgamma} it is shown that the sum-rate vanishes as $n$ grow, if we let $p$ and $\gamma$ scale with $n$ in such a way that $p = p_n \sim O(1/n^a)$ and $\gamma = \gamma_n \sim O(1/n^b)$ with $a$ and $b$ non-negative numbers such that $a+b \ne 1$.
Putting this result together with that of Appendix \ref{app:AppUfixedparameters}, we conclude that the only way to get a non vanishing limit of the sum-rate as the number of backlogged nodes grows can possibly be adapting $p_n$ and $\gamma_n$ as described above, with $a+b = 1$.

Scaling the transmission probability with the number of backlogged nodes is a well known approach to the stabilitzation of Slotted ALOHA \cite{Bertsekas1992} and \ac{CSMA} \cite{Bianchi2000}.
We therefore focus on the less studied scaling, where $\gamma$ is set inversely proportional to $n$, while leaving all backlogged nodes transmit in a slot, i.e., we consider the scaling $\gamma_n = 1/(\alpha n)$ and $p_n = 1$.

%

\section{Asymptotic Analysis}
\label{sec:analysis}

The first step is to transform the inequalities in \cref{eq:successfulSIC} into a simpler form.
This is the purpose of next theorem.

\begin{theorem}
\label{theo:introVj}
Let $V_j, \, j = 1,\dots,n$, be a non-negative random variables defined by 
\begin{equation}
V_j = \sum_{ k = j }^{ n }{ b_{jk} X_k }
\end{equation}
where $X_k, \, k = 1,\dots,n$, are \ac{i.i.d.} negative exponential random variables with mean 1 and
\begin{equation}
\label{eq:coeffbjkdef}
b_{jk} = \begin{cases}
     - \gamma \xi & k = 1,\dots,j-1, \\
     \frac{ 1 + j \gamma - (j-1) \gamma \xi }{ k } - \gamma & k = j,\dots,n.
\end{cases}
\end{equation}
for $j = 1,\dots,n$.
The packet carried by the $m$-th strongest signal is decoded successfully if and only if $V_j \ge c$ for $j = 1,\dots,m$.
\end{theorem}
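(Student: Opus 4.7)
The plan is to turn the \ac{SIC} success criterion in \cref{eq:successfulSIC} into an equivalent linear inequality in a fresh collection of \ac{i.i.d.}\ exponentials. First, since the denominator of $\Gamma_j$ is strictly positive, the inequality $\Gamma_j \ge \gamma$ rearranges cleanly to
\begin{equation*}
Y_{(j)} - \gamma \sum_{r=j+1}^{n} Y_{(r)} - \gamma\xi \sum_{r=1}^{j-1} Y_{(r)} \ge c,
\end{equation*}
so the natural definition of $V_j$ is the left-hand side, and the ``if and only if'' for the $m$-th packet follows immediately from the recursive nature of \ac{SIC}: packet $m$ is decoded exactly when packets $1,\dots,m-1$ have already been decoded and the condition $\Gamma_m \ge \gamma$ holds, which by induction amounts to $V_j \ge c$ for every $j \le m$. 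The remaining work is purely algebraic, namely to show that this rearranged quantity admits the representation claimed in \cref{eq:coeffbjkdef}.

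Second, I would invoke the Rényi representation of exponential order statistics: for $Y_1,\dots,Y_n$ \ac{i.i.d.}\ $\mathrm{Exp}(1)$, the normalised spacings $X_k = k\bigl(Y_{(k)}-Y_{(k+1)}\bigr)$ (with the convention $Y_{(n+1)}\equiv 0$) are themselves \ac{i.i.d.}\ $\mathrm{Exp}(1)$, and telescoping gives
\begin{equation*}
Y_{(j)} = \sum_{k=j}^{n} \frac{X_k}{k}.
\end{equation*}
Plugging this into the two order-statistic sums and swapping the order of summation yields $\sum_{r=j+1}^{n} Y_{(r)} = \sum_{k=j+1}^{n} \tfrac{k-j}{k} X_k$, because $X_k/k$ is counted once for each $r\in\{j+1,\dots,k\}$. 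The lower sum splits into two regimes: for $k \le j-1$ the index $r$ ranges over $\{1,\dots,k\}$ so $X_k$ contributes with coefficient $1$, whereas for $k \ge j$ the range is $\{1,\dots,j-1\}$ so $X_k/k$ contributes with coefficient $j-1$, giving $\sum_{r=1}^{j-1} Y_{(r)} = \sum_{k=1}^{j-1} X_k + \sum_{k=j}^{n} \tfrac{j-1}{k} X_k$.

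Finally, one collects the coefficient of each $X_k$ in the expression for $V_j$. For $k \le j-1$ only the last term above contributes, with weight $-\gamma\xi$, matching the first branch of \cref{eq:coeffbjkdef}. For $k \ge j$ the three contributions combine to $\tfrac{1}{k} - \gamma \tfrac{k-j}{k} - \gamma\xi \tfrac{j-1}{k}$, which factors to $\tfrac{1 + j\gamma - (j-1)\gamma\xi}{k} - \gamma$, matching the second branch. The genuinely nontrivial ingredient is the Rényi representation; the main source of potential slips is the bookkeeping in the swap of summations for $\sum_{r=1}^{j-1} Y_{(r)}$, where the contribution of a single $X_k$ changes form according to whether its index lies below or above $j$. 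Everything else is linear arithmetic, so I do not expect any real obstacle beyond this case split.
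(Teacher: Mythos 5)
Your proposal is correct and follows essentially the same route as the paper's Appendix~A: the paper also rewrites $\Gamma_j \ge \gamma$ as a linear inequality in the order statistics, uses the representation $Y_{(k)} = \sum_{r=k}^{n} X_r/r$ (derived there via the memoryless-timer argument rather than cited as R\'enyi's representation), performs the same two summation swaps with the same case split at $k=j$, and collects identical coefficients $b_{jk}$. No substantive difference.
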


\begin{proof}
See Appendix A.
\end{proof}

The random variables $V_j, \, j = 1,\dots,n$ are not independent of one another.
The joint condition for successful decoding stated in \cref{theo:introVj} is not easy to evaluate.

In case $\xi = 0$, the probability $p_V(j) = \mathcal{P}( V_j \ge c )$, can be evaluated starting from the Laplace transform of the \ac{PDF} of $V_j$
\begin{equation}
\varphi_{V_j}(s) = \mathrm{E}[ e^{ - s V_j } ] = \frac{ 1 }{ \prod_{ k = j }^{ n }{ ( 1 + b_{jk} s ) } }
\end{equation}

This transform can be inverted in closed form, but the numerical evaluation of the obtained formulas is highly unstable, due to sums of alternating sign terms.
Numerical error propagation forbids a direct evaluation of the probabilities $p_V(j)$.
Numerical experience suggests that it is much more reliable and efficient to use numerical inversion of the Laplace transform directly, e.g., exploiting the Fourier algorithm in \cite{abate1995numerical}.

\cref{fig:PVjgec} illustrates $p_V(j) = \mathcal{P}( V_j \ge c )$ as a function of $j$ for four values of the number of nodes $n$, with $\xi = 0$ and $\gamma = 1/( \alpha n )$ for a given constant $\alpha > 0$ ($\alpha = 0.32$ in the left plot, $\alpha = 0.38$ in the right one).
The realizations of the random variable $V$ on the $x$ axis are normalized as $j/n$.
The number of nodes is stretched to possibly unrealistically large values for two reasons.
A first reason is to highlight the asymptotic behavior of the probabilities $p_V(j)$ as $n$ gets large.
The second reason derives from a system point of view: massive \ac{IoT} calls for the possibility of even very large set of concurrently transmitting nodes.
Even if this may not be feasible with today's technology, it is interesting to understand what happens in such a regime.

\begin{figure}
	\centering
	\subfloat[$\alpha = 0.32$]{ \label{fig:PVjgecalfa032}
	    \includegraphics[width=.45\columnwidth]{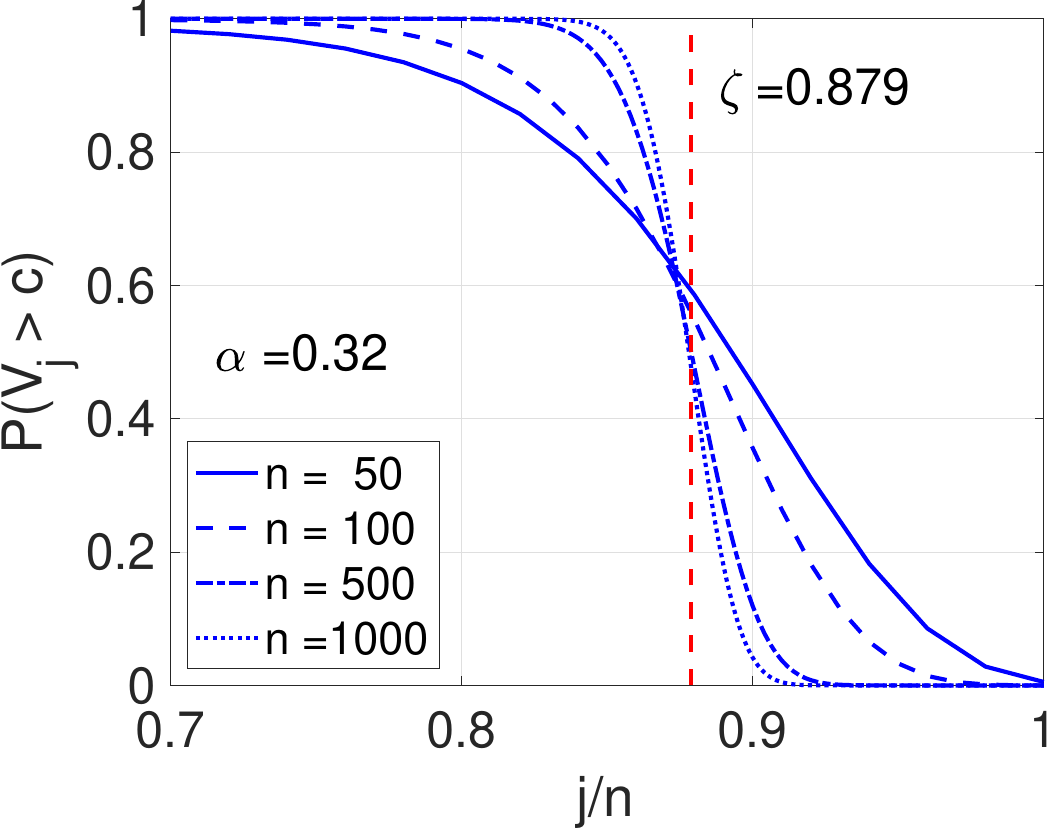} } \;%
	\subfloat[$\alpha = 0.38$]{ \label{fig:PVjgecalfa038} 
	    \includegraphics[width=.45\columnwidth]{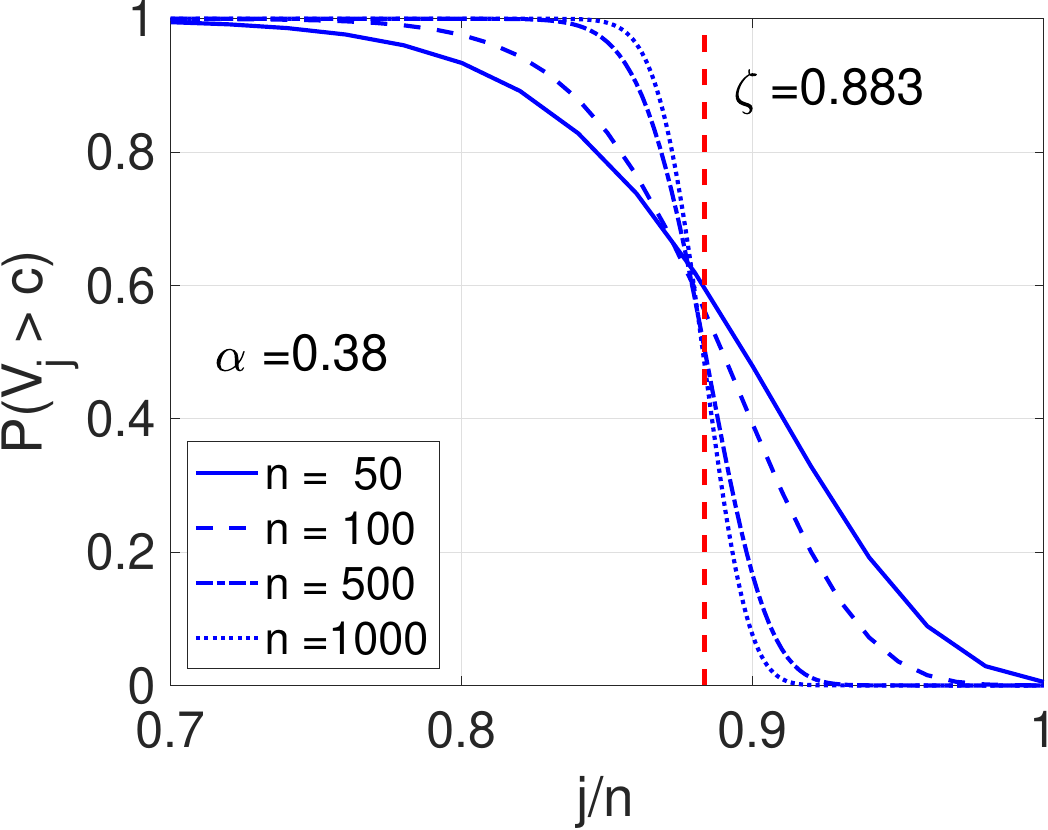} }
	\caption{Probability that the random variable $V_j$ exceeds the threshold $c$ for $j = 1,\dots,n$. The abscissa is normalized as $j/n$. The required \ac{SNIR} threshold is set to $\gamma = 1/( \alpha n )$, with $\alpha = 0.32$ in the plot on the left, $\alpha = 0.38$ in the plot on the right.}
	\label{fig:PVjgec}
\end{figure}

The probabilities $p_V(j)$ form a decreasing sequence with $j$, which matches intuition, given that higher values of $j$ are related to a higher number of successfully decoded packets.
Hence, $p_V(j) \approx 1$ for lower values of $j$, while $p_V(j)$ becomes negligible for $j$ approaching $n$.
In fact, for $j = n$ it is easy to verify that $p_V(n) = e^{-nc} = (1-\epsilon)^n$.
The transition from 1 to 0 as $j$ is increased is apparent in \cref{fig:PVjgec}.
The striking result is that the transition gets steeper as $n$ grows, i.e., \ac{SIC} decoding in our model moves towards a phase transition phenomenon as the number of concurrent transmissions gets large.
The convergence is rather slow, since very large values of $n$ have to be considered to appreciate a sharp transition.
The transition corresponds to a decoding threshold, highlighted as $\zeta$ in the plots. 
Up to a fraction $\zeta$ of the $n$ transmissions are decoded successfully as $n \rightarrow \infty$.
In terms of random variables,  \cref{fig:PVjgec} tells us that the random variables $V_j$ tend to become deterministic quantities.
This is formally proved in the following theorem.

\begin{theorem}
\label{theo:asymptoticpropertiesofmomentsofVj}
Let $\mu_j(n)$ and $\sigma^2_j(n)$ denote the mean and the  variance of $V_j$ respectively, for $n$ concurrent transmissions and for $j = 1,\dots,n$.
Let $\gamma = 1/(\alpha n)$ for a positive constant $\alpha$.
Let also
\begin{equation}
\label{eq:definitionfofx}
f_{\alpha,\xi}(x) = - \left( 1 + \frac{ (1-\xi) x }{ \alpha } \right) \log x - \frac{ 1 - (1-\xi) x }{ \alpha }
\end{equation}
for $x \in (0,1]$.
Then
\begin{align}
&\lim_{ n \rightarrow \infty }{ \frac{ 1 }{ n } \sum_{ j = 1 }^{ n }{ \left| \mu_j(n) - f_{\alpha,\xi}\left( \frac{ j }{ n } \right) \right| } } = 0  \\
&\lim_{ n \rightarrow \infty }{ \frac{ \sigma_j(n) }{ \mu_j(n) } } = 0
\end{align}
Hence, for any $\epsilon > 0$, we have
\begin{equation}
\lim_{ n \rightarrow \infty }{ \mathcal{P}\left( \left| \frac{ V_j }{ \mu_j(n) } - 1 \right| > \epsilon \right) } = 0 
\end{equation}
\end{theorem}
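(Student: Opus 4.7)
The plan is to exploit the fact that $V_j$ is a linear combination of i.i.d.\ unit-mean exponential random variables, so that
\begin{equation*}
\mu_j(n) = \sum_{k} b_{jk}, \qquad \sigma_j^2(n) = \sum_{k} b_{jk}^2 .
\end{equation*}
Both sums can be evaluated in closed form in terms of harmonic numbers, and the three assertions of the theorem then follow from (i) an asymptotic estimate of each sum under the scaling $\gamma = 1/(\alpha n)$, and (ii) a Chebyshev-type tail bound.

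First, writing $A_j = 1 + j\gamma - (j-1)\gamma\xi$ and $H_m = \sum_{k=1}^m 1/k$, direct computation gives
\begin{equation*}
\mu_j(n) = -(j-1)\gamma\xi + A_j ( H_n - H_{j-1} ) - \gamma ( n - j + 1 ) .
\end{equation*}
Inserting $\gamma = 1/(\alpha n)$, setting $x = j/n$, and using $H_n - H_{j-1} = -\log x + O(1/j)$ together with $j\gamma \to x/\alpha$ and $\gamma(n-j+1) \to (1-x)/\alpha$, a routine expansion yields $\mu_j(n) = f_{\alpha,\xi}(j/n) + R_j(n)$, with a remainder $R_j(n)$ of order $1/n$ whenever $j$ stays in an interval $[\delta n, (1-\delta) n]$. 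Splitting the Cesaro sum into this bulk and the two edge sets then delivers the first limit: the bulk contributes at most $O(1/n)$ per term and thus $O(1/n)$ on average, while the edges, where $|\mu_j(n)| + |f_{\alpha,\xi}(j/n)|$ is at most $O(\log n)$, can be made negligible by taking $\delta = \delta_n$ shrinking slowly enough (for instance $\delta_n = 1/\log^2 n$).

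Second, the variance is
\begin{equation*}
\sigma_j^2(n) = (j-1)\gamma^2\xi^2 + A_j^2 \sum_{k=j}^n \frac{1}{k^2} - 2 A_j \gamma ( H_n - H_{j-1} ) + \gamma^2 ( n - j + 1 ) .
\end{equation*}
Under $\gamma = 1/(\alpha n)$ each of the four terms is of order $1/n$ uniformly for $j/n$ in any compact subinterval of $(0,1)$, so that $\sigma_j(n) = O(n^{-1/2})$. Combined with $\mu_j(n) \to f_{\alpha,\xi}(j/n)$, which is strictly positive in the regime of interest, this yields $\sigma_j(n)/\mu_j(n) \to 0$. The last statement is then an immediate consequence of Chebyshev's inequality,
\begin{equation*}
\mathcal{P}\!\left( \left| \frac{V_j}{\mu_j(n)} - 1 \right| > \epsilon \right) \le \frac{1}{\epsilon^2} \left( \frac{\sigma_j(n)}{\mu_j(n)} \right)^{2} \longrightarrow 0 .
\end{equation*}

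The main obstacle is the edge behaviour of the Cesaro sum. Near $j/n \to 0$ the limit $f_{\alpha,\xi}$ blows up logarithmically and the harmonic-number remainder $O(1/j)$ ceases to be small, while near $j/n \to 1$ the mean $f_{\alpha,\xi}(j/n)$ shrinks and may become non-positive (indeed $f_{\alpha,\xi}(1) = -\xi/\alpha$), which prevents a uniform pointwise control of $\sigma_j/\mu_j$. Both issues are handled by a window argument that lets the bad set have width $o(n)$ while simultaneously bounding its worst-case logarithmic contribution; inside the bulk window the harmonic and power-sum asymptotics are sharp and give the desired rates.
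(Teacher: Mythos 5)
Your proposal follows essentially the same route as the paper: closed-form expressions for $\mu_j(n)$ and $\sigma_j^2(n)$ via harmonic numbers and $\sum 1/k^2$, asymptotic matching of $\mu_j(n)$ to $f_{\alpha,\xi}(j/n)$ under $\gamma=1/(\alpha n)$, and Chebyshev's inequality to conclude. The only real difference is bookkeeping: the paper derives the uniform bound $|\mu_j(n)-f_{\alpha,\xi}(j/n)|\le 1/j+(\log n+1)/(\alpha n)$ valid for \emph{all} $j$, so the Cesàro average is $O(\log n/n)$ with no bulk/edge window needed, and it treats the fixed-$j$ regime of the variance ratio separately (there $\sigma_j^2(n)$ tends to a constant but $\mu_j(n)\sim\log n$ diverges), a case your compact-subinterval argument does not cover but which your method handles with the same ingredients.
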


\begin{proof}
See Appendix B.
\end{proof}

Note that the function $f_{\alpha,\xi}(x)$ can be written as
\begin{equation}
\label{eq:relationshiofandg}
f_{\alpha,\xi}(x) = - \frac{ 1 }{ \alpha } + g_{ \frac{ \alpha }{ 1 - \xi } }(x)
\end{equation}
where
\begin{equation}
\label{eq:definitionofgbetaofx}
g_\beta(x) = - \left( 1 + \frac{ x }{ \beta } \right) \log x + \frac{ x }{ \beta }
\end{equation}
This representation of $f_{\alpha,\xi}(x$ shows that, apart from a constant, it depends only on the parameter $\beta = \alpha/(1-\xi)$.

\cref{theo:asymptoticpropertiesofmomentsofVj} says that the mean values of the random variables $V_j$ tend to lie on the curve $f_{\alpha,\xi}(x)$, with $x = j/n$, as $n$ grows, while the coefficient of variation (the ratio of the standard deviation to the mean) shrinks to 0.
Hence, for each $j$ with $1 \le j \le n$, $V_j$ tends to a deterministic random variable that takes the value $\mu_j(n)$ as $n \rightarrow \infty$.
This asymptotic regime sets on because of the crucial condition on the required \ac{SNIR} threshold, namely, $\gamma = 1/( \alpha n )$.
Intuitively, if the required \ac{SNIR} scales with the number $n$ of concurrent transmissions, then the mean of $V_j$ tend to a positive value.
The implications on the number of correctly decodable packets are addressed by next theorem.
Before stating it, numerical evidence of the convergence of the random variables $V_j$ to a deterministic quantity is given in curve plots, from which we appreciate \emph{how} the asymptotes are achieved.

The sequence of means $\mu_j(n)$ (discrete stems ending with a circle) is compared with $f_{\alpha,\xi}(x)$ (thick red line) in \cref{fig:meanofVjcomparedtofofx} for $\alpha = 0.32$, $\xi = 0$ and several values of $n$, by placing the mean values at points $j/n, \, j = 1,\dots,n$.
It is apparent that mean values match more and more closely as $n$ grows.
It is also evident that convergence is rather slow.
It appears to be very accurate for $n = 500$, but it is still not perfect for $n = 50$.

\begin{figure}
	\centering
	\subfloat[$n = 50$]{ \label{fig:meanofVjcomparedtofofx50}
	    \includegraphics[width=.45\columnwidth]{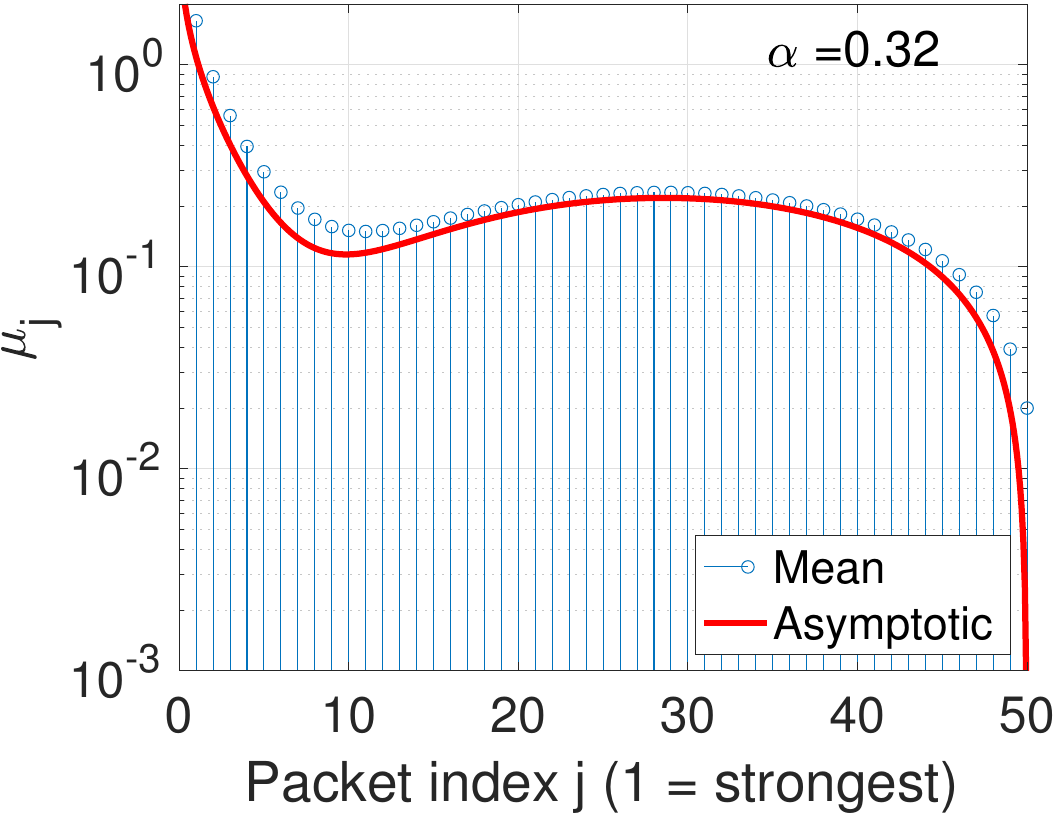} } \;%
	\subfloat[$n = 100$]{ \label{fig:meanofVjcomparedtofofx100} 
	    \includegraphics[width=.45\columnwidth]{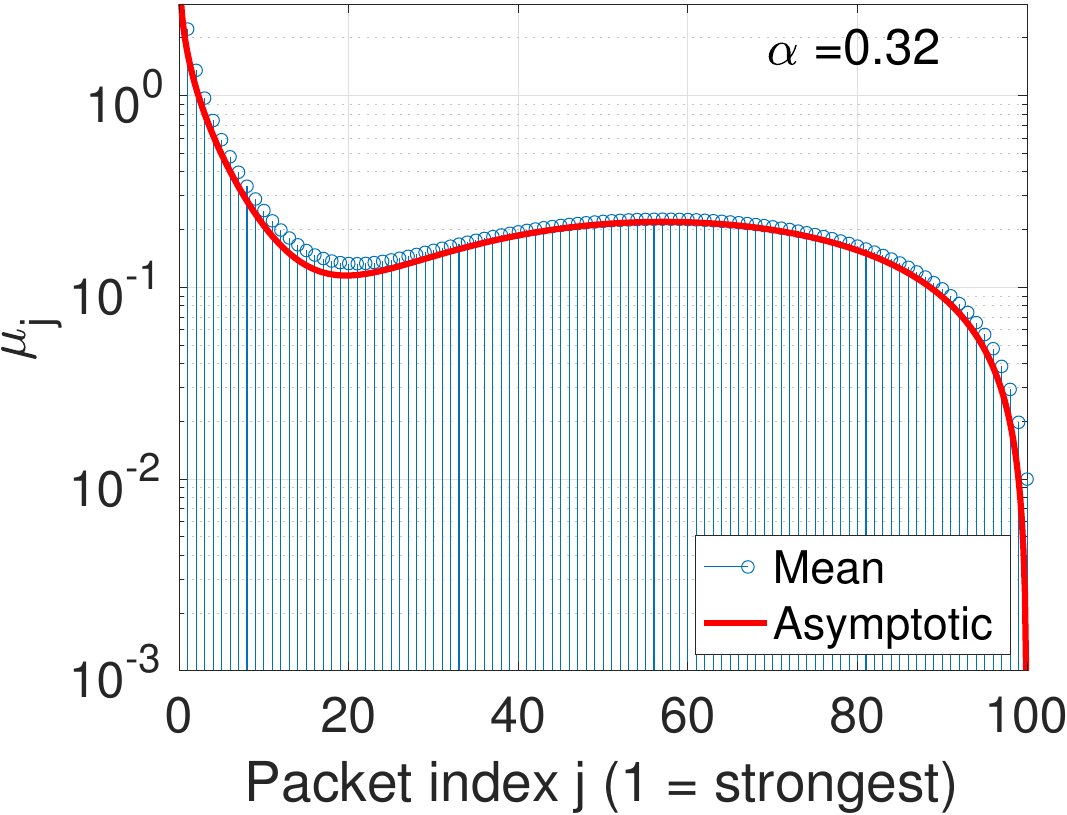} } \\
	\subfloat[$n = 500$]{ \label{fig:meanofVjcomparedtofofx500}
	    \includegraphics[width=.45\columnwidth]{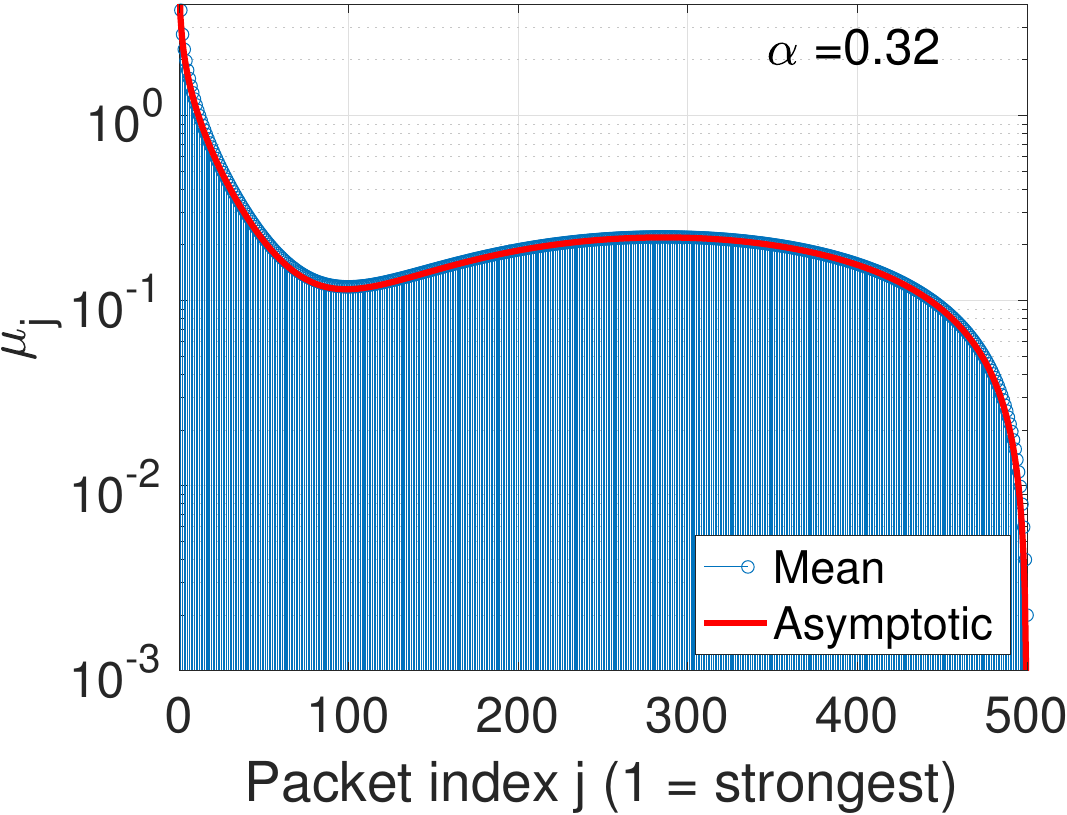} } \;%
	\subfloat[$n = 1000$]{ \label{fig:meanofVjcomparedtofofx1000} 
	    \includegraphics[width=.45\columnwidth]{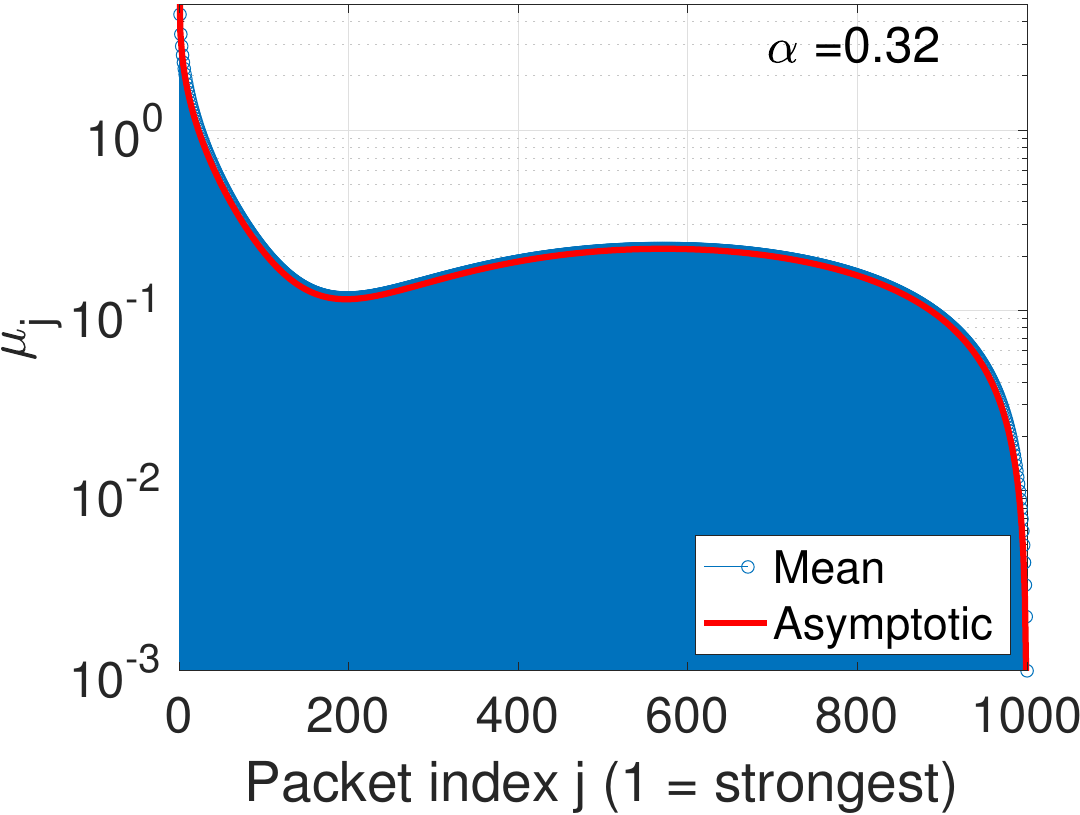} }
	\caption{Sequence of mean values $\mu_j(n)$ (discrete stems ending with a circle) compared with function $f_{\alpha,\xi}(x)$ (thick red line) as a function of $x = j/n$ for several values of $n$ and $\xi = 0$ and $\alpha = 0.32$.}
	\label{fig:meanofVjcomparedtofofx}
\end{figure}

 \cref{fig:convergenceofVjlinalfa032,fig:convergenceofVjlinalfa038} show the sequence of means $\mu_j(n)$ as a function of $x = j/n$ (solid line), as well as the asymptotic curve $f_{\alpha,\xi}(x)$ towards which the mean values approach as $n$ grows.
 The shaded area corresponds to one standard deviation from the mean, i.e., it is the area within the limits $\mu_j(n) \pm \sigma_j(n)$.
Plots are shown for several values of $n$,  for two values of $\alpha$ and for $\xi = 0$.
Studying the function $f_{\alpha,\xi}(x)$ for $x \in (0,1]$, it can be verified that its derivatives has two zeros, $x_1$ and $x_2$, with $x_1 < 1/e < x_2$, if and only if $\alpha/(1-\xi) < 1/e \approx 0.37$. 
Otherwise, the function $f_{\alpha,\xi}(x)$ is monotonously decreasing.

More in depth, the derivative of $f^\prime_{\alpha,\xi}(x)$ is $ = -1/x - [ (1-\xi) / \alpha ] \log x$.
Hence $x_1$ and $x_2$ are the roots of
\begin{equation}
\frac{ \alpha }{ 1-\xi } + x \log x = 0
\end{equation}
for $x \in (0,1]$.
These roots, when they exist, depend only on the parameter $\beta = \alpha/(1-\xi)$.
When existing, the smaller zero $x_1$ corresponds to a local minimum, while the larger $x_2$ to a local maximum.
Hence, in \cref{fig:convergenceofVjlinalfa032,fig:convergenceofVjlinalfa038} two values of $\alpha$ across the critical value $1/e$ have been picked.
The zero $x_1$ is a function of $\beta$, defined implicitly by\footnote{The function $x_1(\beta)$ is connected to Lambert's function, but we will not pursue this connection further here.}  $\beta + x_1(\beta) \log x_1(\beta) = 0$. 
Deriving both sides, it follows that $x_1^\prime(\beta) = -1/(1+\log x_1(\beta) )$, for $\beta \in (0,1/e)$.
Since $x_1(\beta) < 1/e$, it must be $x_1^\prime(\beta) > 0$ for $\beta \in (0,1/e)$, i.e., $x_1(\beta)$ is an increasing function of $\beta$, its maximum being attained for $\beta = 1/e$, where it is $x_1 = x_2 = 1/e$.
We note that the smallest positive root of $f_{\alpha,\xi}(x) = c$ is a continuous function of $\beta$ if and only if $f_{\alpha,\xi}(x_1(1/e)) = f_{\alpha,\xi}(1/e) \le c$.
Since for $\beta = 1/e$ we have $f_{\alpha,\xi}(1/e) = 3-e/(1-\xi)$, the condition becomes $\xi \ge 1 - e/(3-c)$.
For $\epsilon = 0.1$, it is $c \approx 0.1054$, hence the condition is $\xi \ge 0.0609$.
For smaller values of $\xi$ the smallest positive root of $f_{\alpha,\xi}(x) = c$ as a function of $\alpha$ has a jump. 

The plots in these figures highlight the convergence of the random variables $V_j$ to deterministic quantities, as the shaded area shrinks for increasing $n$.
It is apparent that convergence is not uniform.
The range of $x$ around the local minimum of the curve $f_{\alpha,\xi}(x)$ appears to be the most critical region for convergence.

\begin{figure}
	\centering
	\subfloat[$n = 50$]{ \label{fig:mupmsigman50lin1}
	    \includegraphics[width=.45\columnwidth]{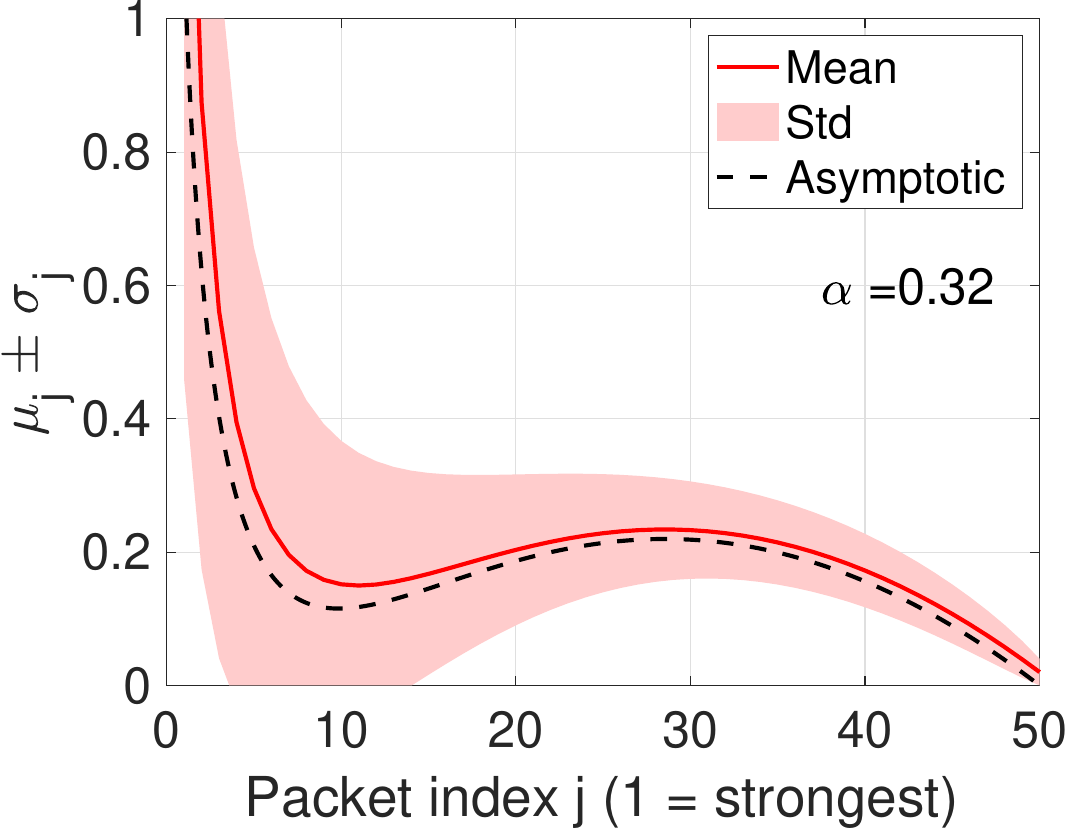} } \;%
	\subfloat[$n = 100$]{ \label{fig:mupmsigman100lin1} 
	    \includegraphics[width=.45\columnwidth]{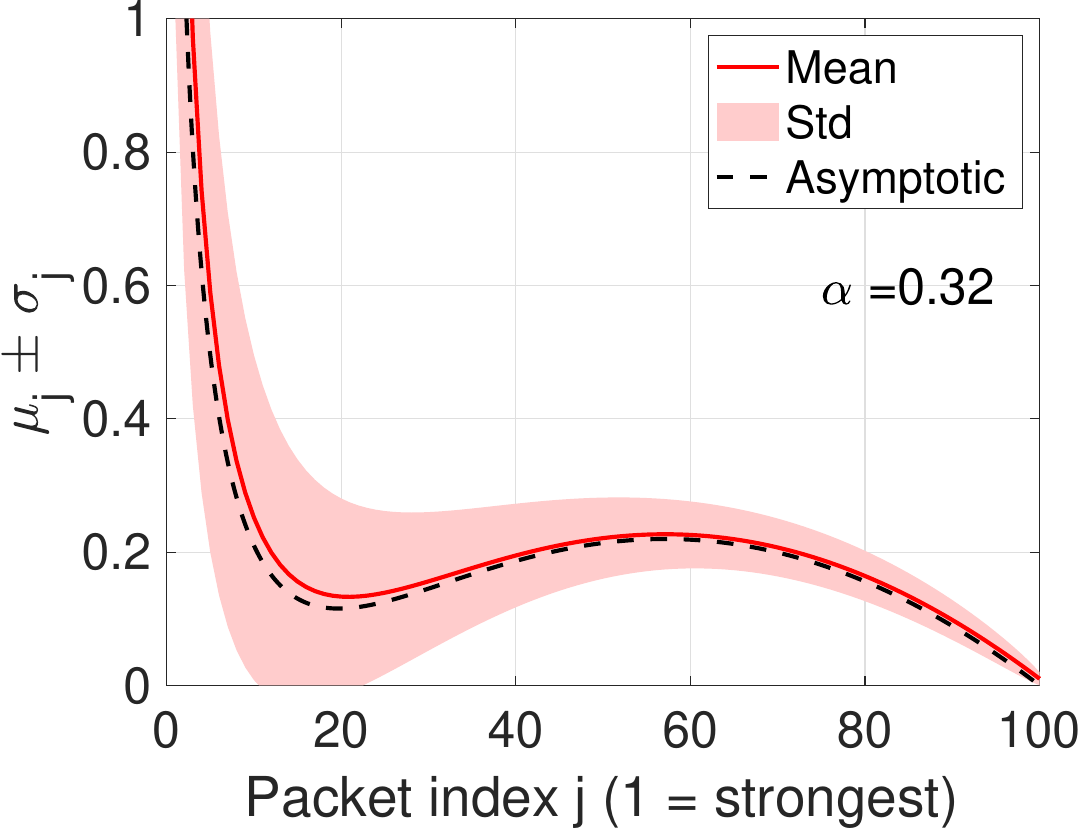} } \\
	\subfloat[$n = 500$]{ \label{fig:mupmsigman500lin1}
	    \includegraphics[width=.45\columnwidth]{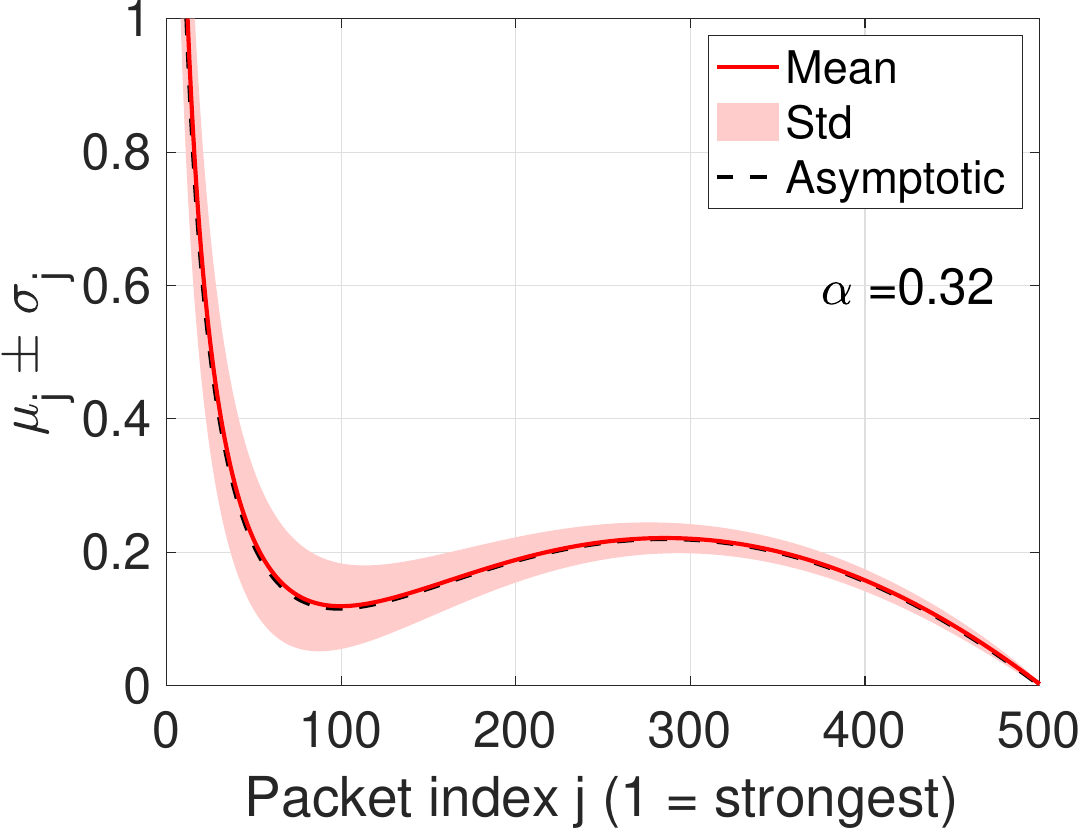} } \;%
	\subfloat[$n = 1000$]{ \label{fig:mupmsigman1000lin1} 
	    \includegraphics[width=.45\columnwidth]{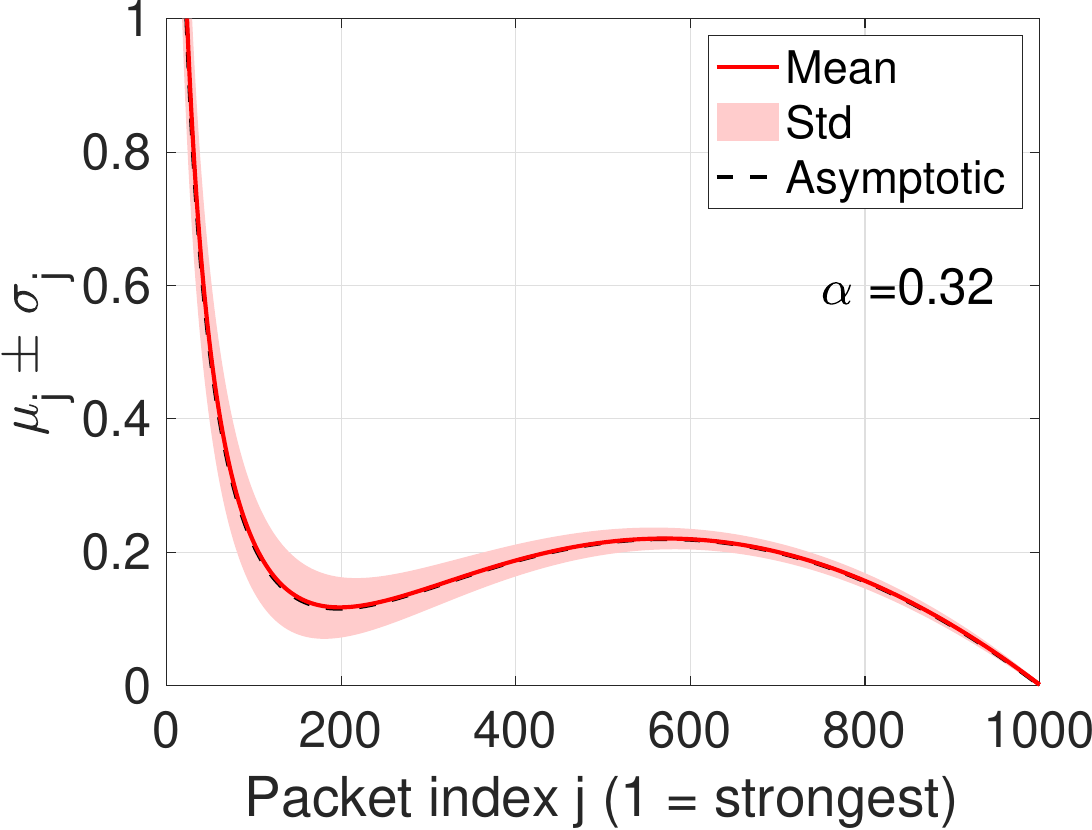} }
	\caption{Sequence of mean values $\mu_j(n)$ as a function of $x = j/n$ for several values of $n$ and $\xi = 0$. The shaded area corresponds to one standard deviation, i.e., $\mu_j(n) \pm \sigma_j(n)$. The dashed line is the function $f_{\alpha,\xi}(x)$ defined in \cref{eq:definitionfofx}. The required \ac{SNIR} threshold is set to $\gamma = 1/( \alpha n )$ with $\alpha = 0.32$}
	\label{fig:convergenceofVjlinalfa032}
\end{figure}

\begin{figure}
	\centering
	\subfloat[$n = 50$]{ \label{fig:mupmsigman50lin2}
	    \includegraphics[width=.45\columnwidth]{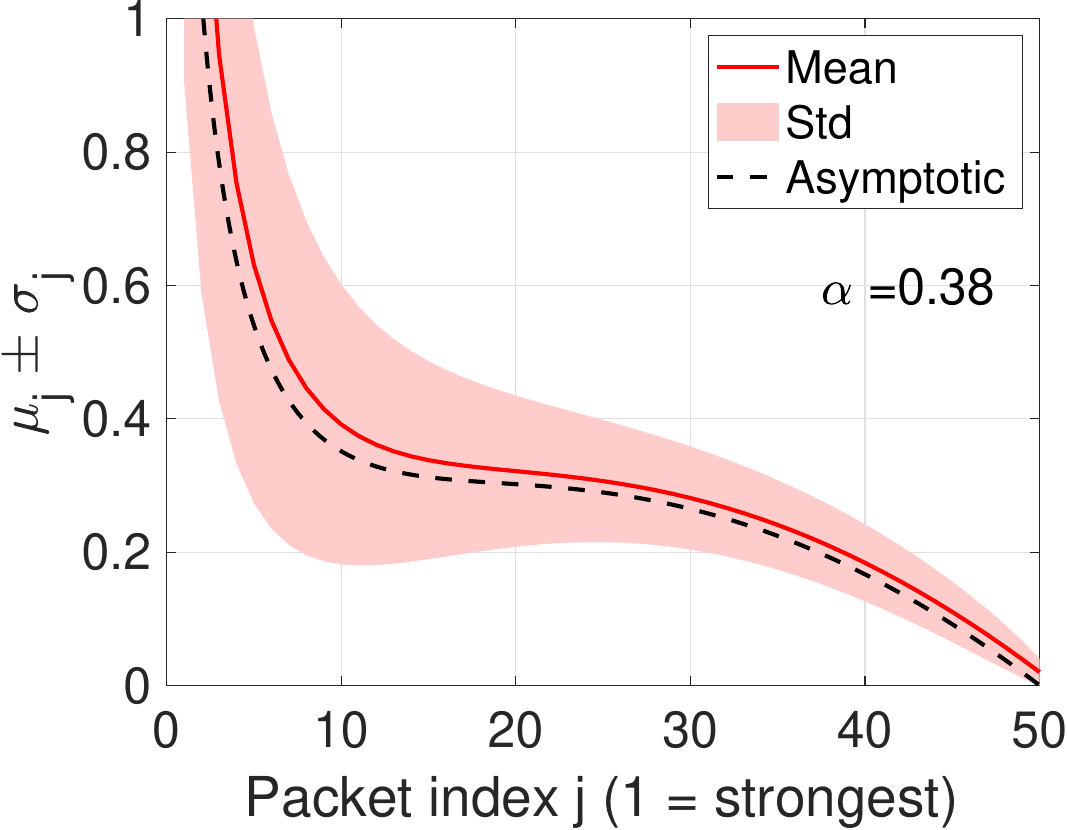} } \;%
	\subfloat[$n = 100$]{ \label{fig:mupmsigman100lin2} 
	    \includegraphics[width=.45\columnwidth]{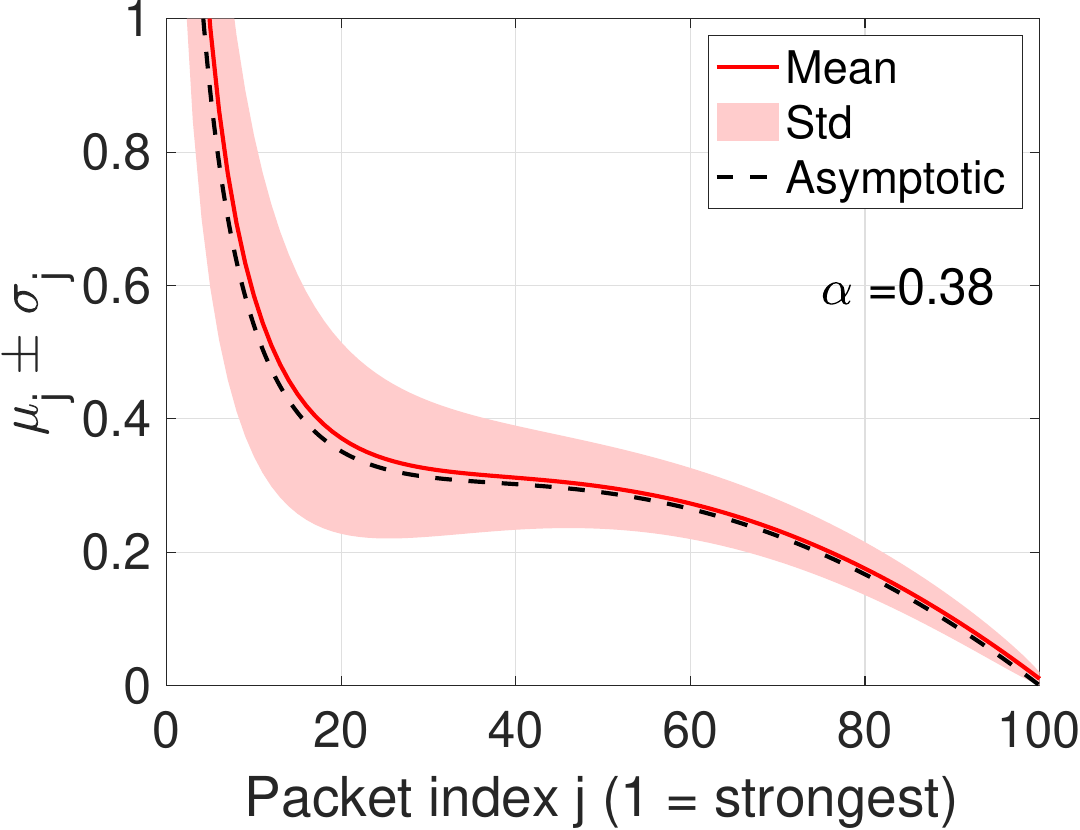} } \\
	\subfloat[$n = 500$]{ \label{fig:mupmsigman500lin2}
	    \includegraphics[width=.45\columnwidth]{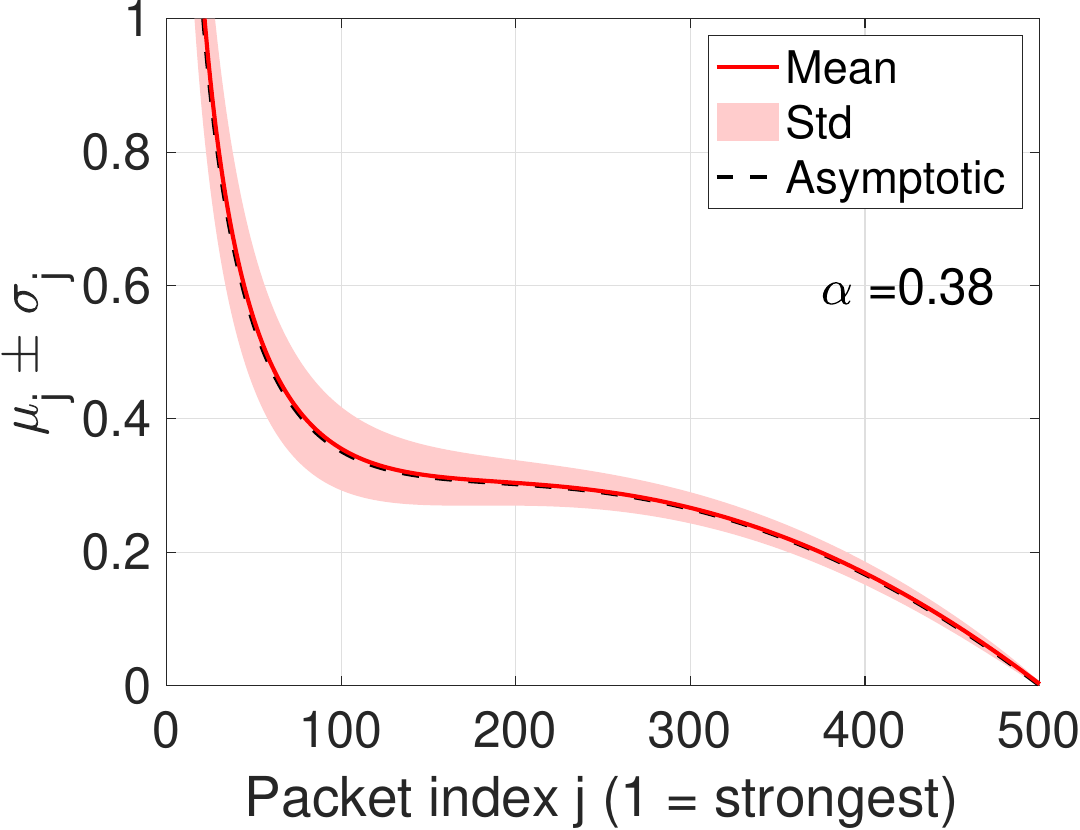} } \;%
	\subfloat[$n = 1000$]{ \label{fig:mupmsigman1000lin2} 
	    \includegraphics[width=.45\columnwidth]{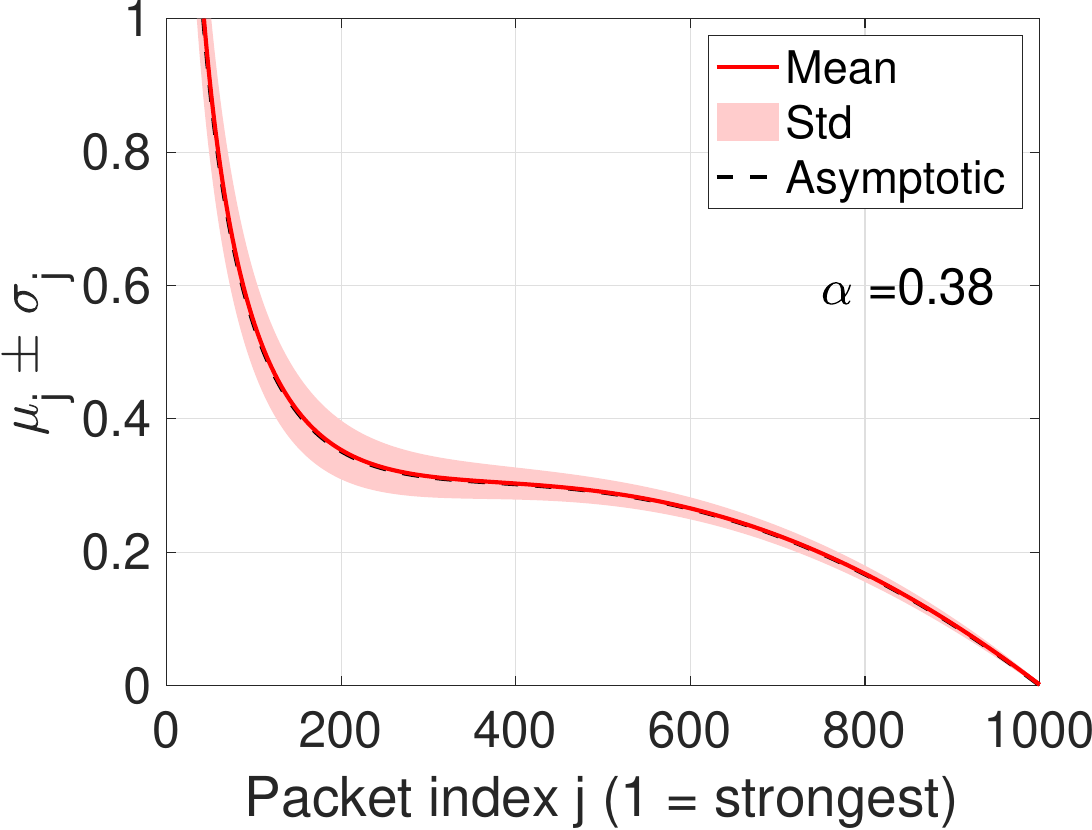} }
	\caption{Sequence of mean values $\mu_j(n)$ as a function of $x = j/n$ for several values of $n$ and $\xi = 0$. The shaded area corresponds to one standard deviation, i.e., $\mu_j(n) \pm \sigma_j(n)$. The dashed line is the function $f_{\alpha,\xi}(x)$ defined in \cref{eq:definitionfofx}. The required \ac{SNIR} threshold is set to $\gamma = 1/( \alpha n )$ with $\alpha = 0.38$}
	\label{fig:convergenceofVjlinalfa038}
\end{figure}

The same data is diaplyed in log scale in  \cref{fig:convergenceofVjlogalfa032,fig:convergenceofVjlogalfa038}.
This scale provides a better appreciation of the non-uniform convergence.
Convergence is more critical for smaller values of $\alpha$, i.e., for $\alpha < 1/e$, when local minimum and maximum exist.

\begin{figure}
	\centering
	\subfloat[$n = 50$]{ \label{fig:mupmsigman50log1}
	    \includegraphics[width=.45\columnwidth]{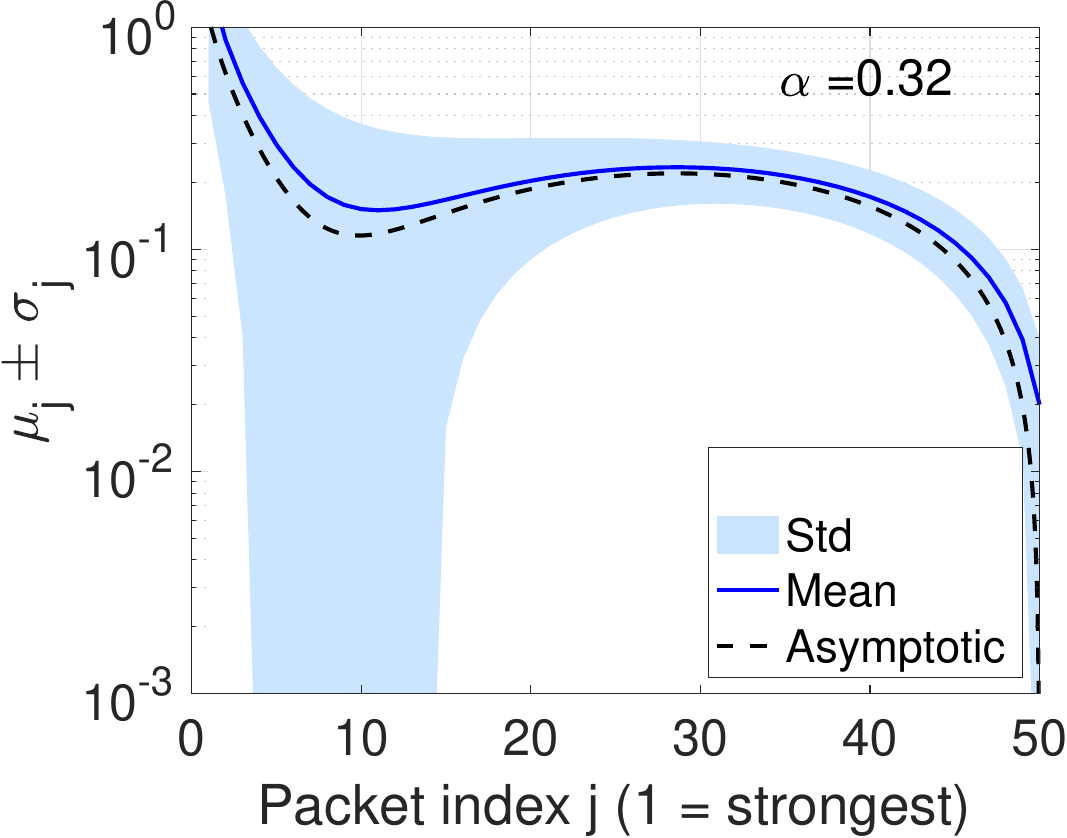} } \;%
	\subfloat[$n = 100$]{ \label{fig:mupmsigman100log1} 
	    \includegraphics[width=.45\columnwidth]{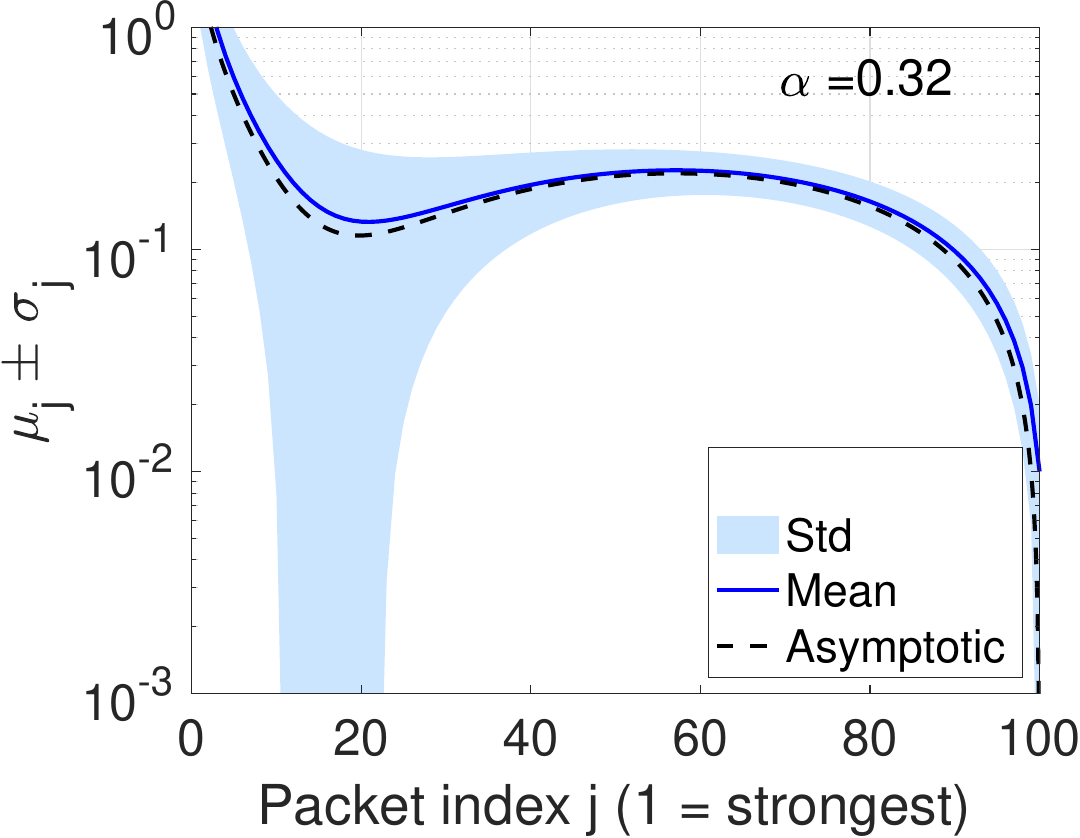} } \\
	\subfloat[$n = 500$]{ \label{fig:mupmsigman500log1}
	    \includegraphics[width=.45\columnwidth]{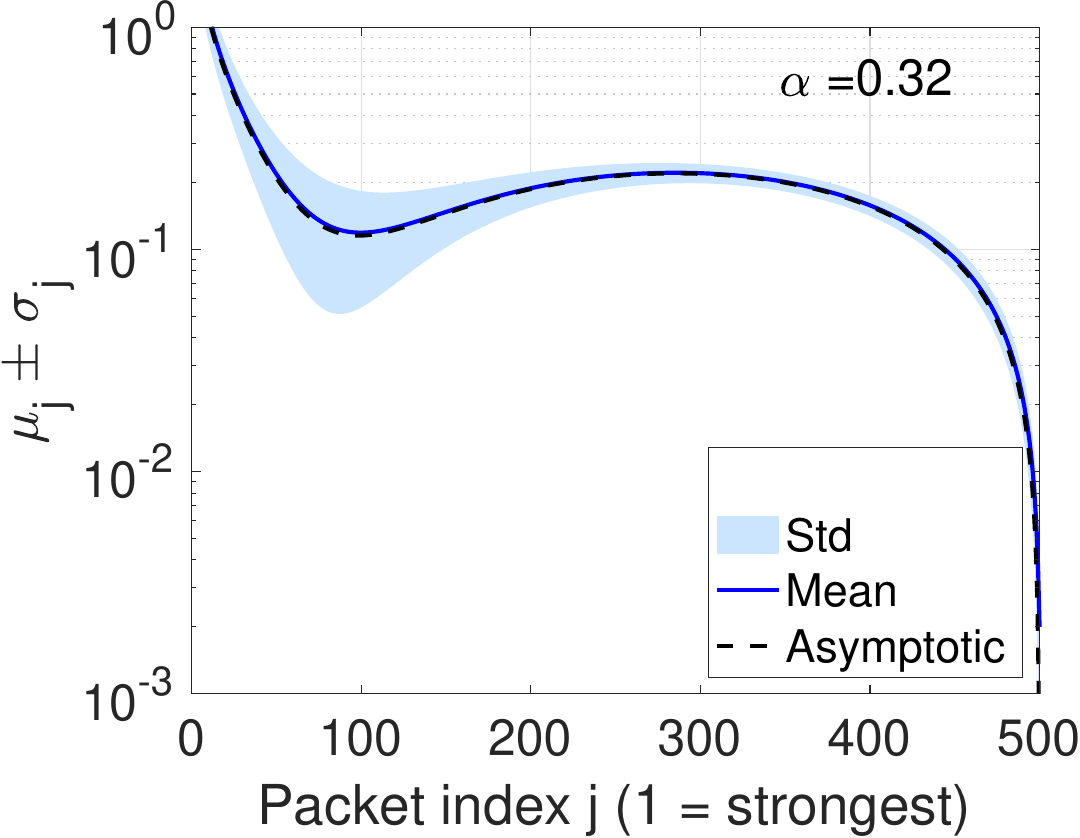} } \;%
	\subfloat[$n = 1000$]{ \label{fig:mupmsigman1000log1} 
	    \includegraphics[width=.45\columnwidth]{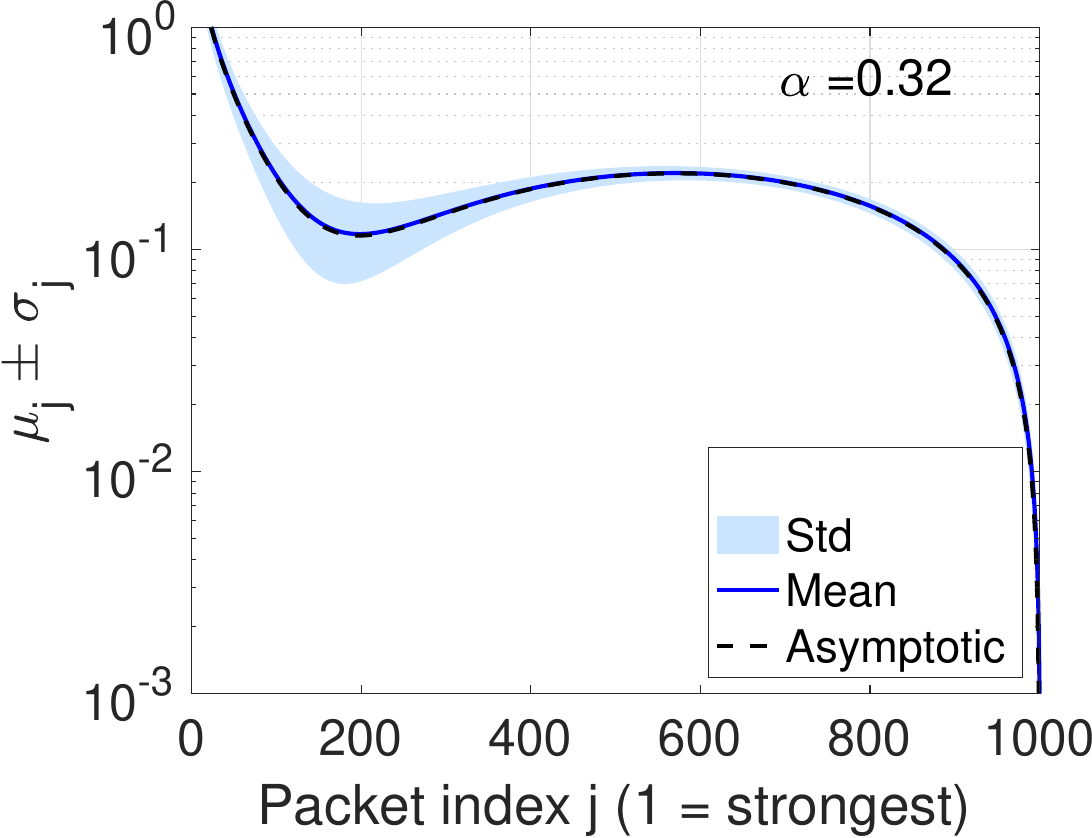} }
	\caption{Sequence of mean values $\mu_j(n)$ as a function of $x = j/n$ for several values of $n$ and $\xi = 0$. The shaded area corresponds to one standard deviation, i.e., $\mu_j(n) \pm \sigma_j(n)$. The dashed line is the function $f_{\alpha,\xi}(x)$ defined in \cref{eq:definitionfofx}. The required \ac{SNIR} threshold is set to $\gamma = 1/( \alpha n )$ with $\alpha = 0.32$}
	\label{fig:convergenceofVjlogalfa032}
\end{figure}

\begin{figure}
	\centering
	\subfloat[$n = 50$]{ \label{fig:mupmsigman50log2}
	    \includegraphics[width=.45\columnwidth]{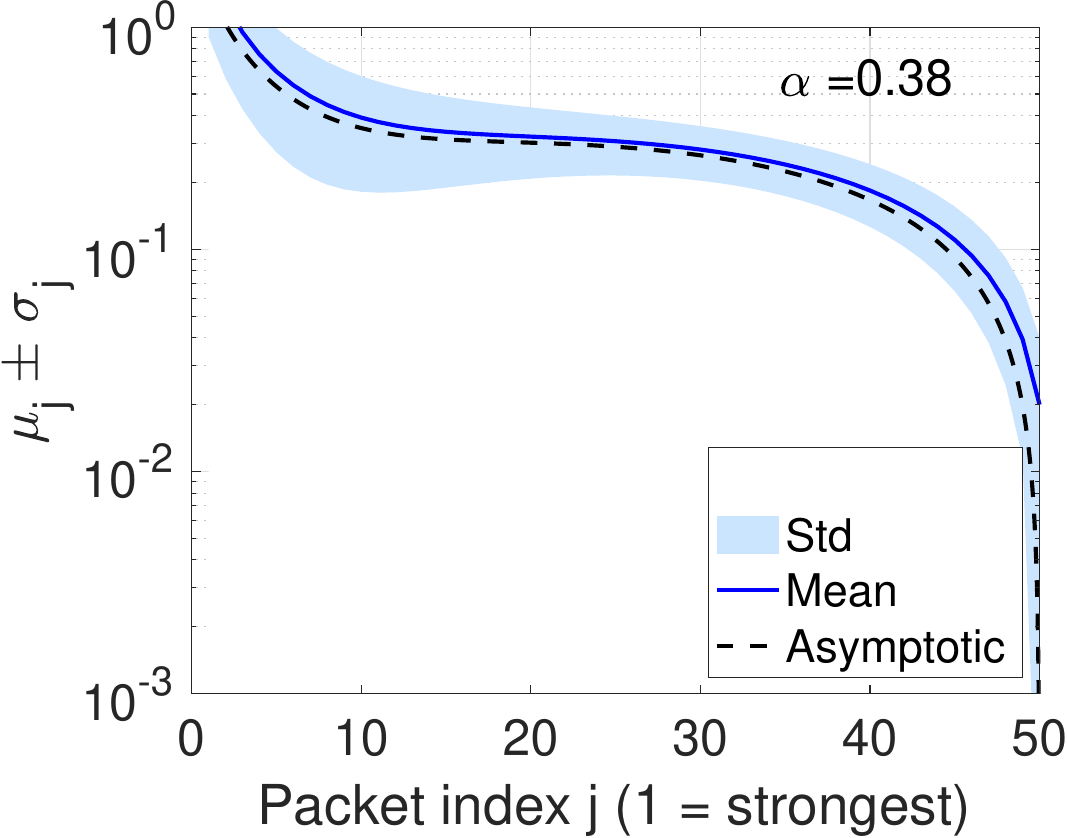} } \;%
	\subfloat[$n = 100$]{ \label{fig:mupmsigman100log2} 
	    \includegraphics[width=.45\columnwidth]{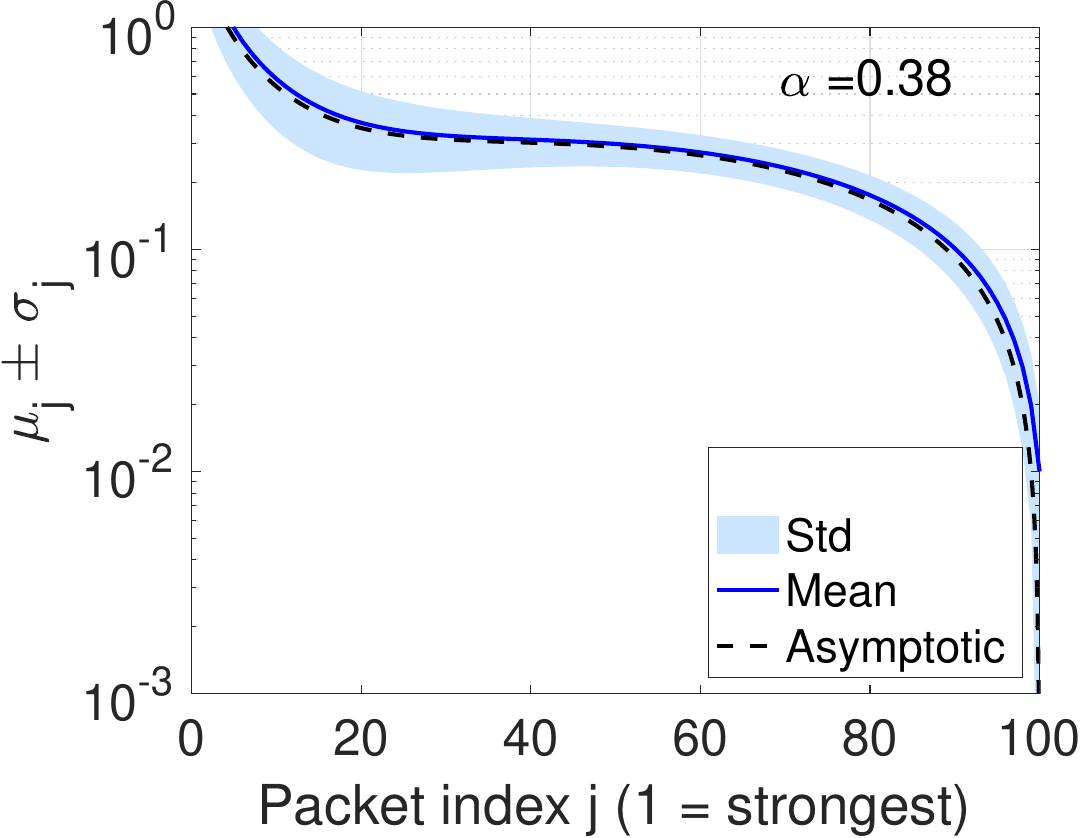} } \\
	\subfloat[$n = 500$]{ \label{fig:mupmsigman500log2}
	    \includegraphics[width=.45\columnwidth]{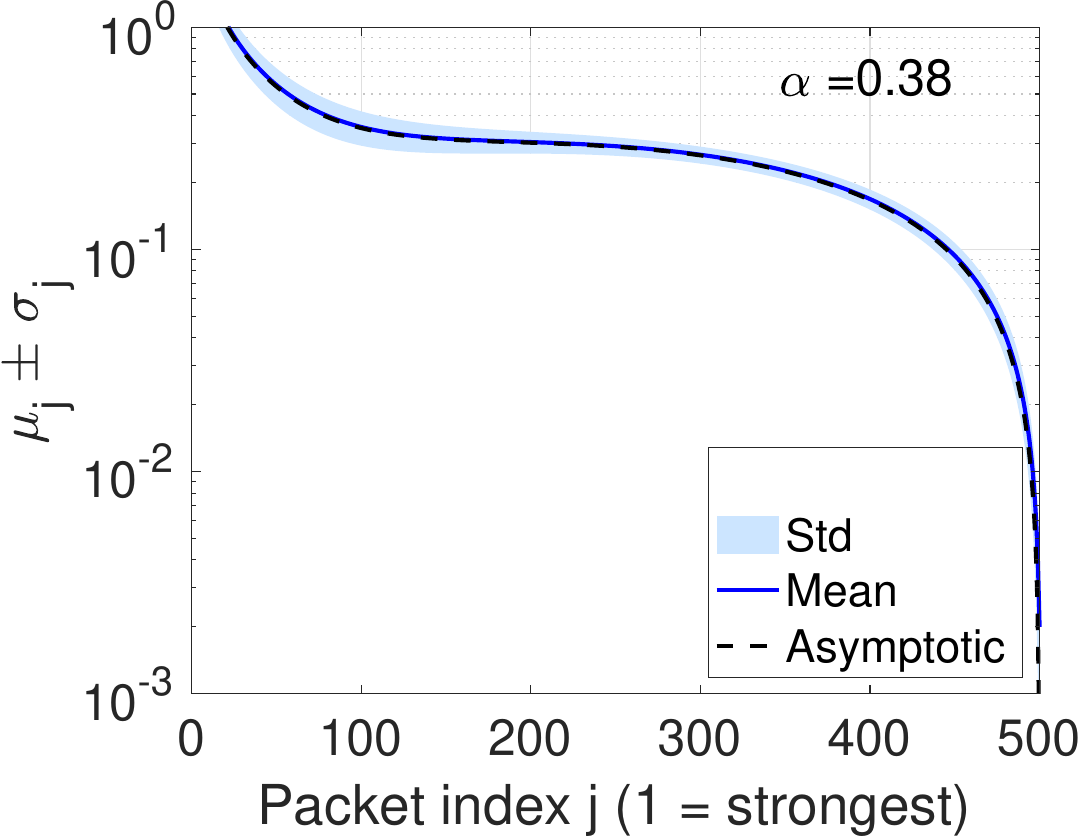} } \;%
	\subfloat[$n = 1000$]{ \label{fig:mupmsigman1000log2} 
	    \includegraphics[width=.45\columnwidth]{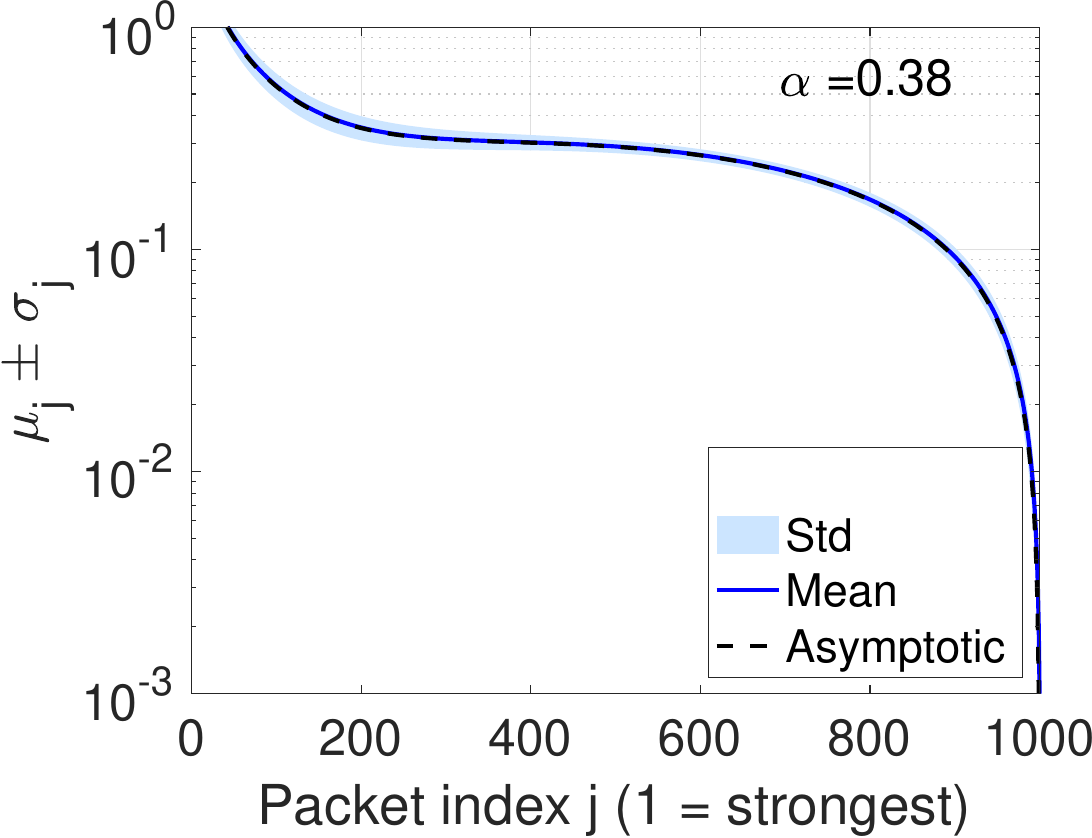} }
	\caption{Sequence of mean values $\mu_j(n)$ as a function of $x = j/n$ for several values of $n$ and $\xi = 0$. The shaded area corresponds to one standard deviation, i.e., $\mu_j(n) \pm \sigma_j(n)$. The dashed line is the function $f_{\alpha,\xi}(x)$ defined in \cref{eq:definitionfofx}. The required \ac{SNIR} threshold is set to $\gamma = 1/( \alpha n )$ with $\alpha = 0.38$}
	\label{fig:convergenceofVjlogalfa038}
\end{figure}

The main result stated below concerns the asymptotic behavior of the mean number of correctly decoded packets in a slot as $n \rightarrow \infty$.
We recall that $m_n(\gamma)$ denotes the mean number of correctly decoded packets in a slot with target \ac{SNIR} equal to $\gamma$, given that $n$ nodes transmit in that slot.
We have the following.

\begin{theorem}
\label{theo:mainresult}
Let $\gamma_n = 1/(\alpha n)$ for a positive constant $\alpha$.
Let also $f_{\alpha,\xi}(x) $ denote the function in \cref{eq:definitionfofx}.
It is $m_n(\gamma_n) \sim \zeta n$ where $\zeta$ is defined as\footnote{Note that the condition $f_{\alpha,\xi}(z) \ge c$ is equivalent to $g_\beta(z) \ge c + 1/\alpha$, where $g(\cdot)$ is defined in \cref{eq:definitionofgbetaofx}.}
\begin{equation}
\label{eq:definitionofzeta}
\zeta = \sup \left\{ x \in (0,1] : f_{\alpha,\xi}(z) \ge c, \forall z \in (0,x) \right\}
\end{equation}
for given values of $\alpha > 0$ and $\xi \in [0,1)$. 
\end{theorem}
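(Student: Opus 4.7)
The plan is to reduce the claim to the asymptotic behavior of the random profile $(V_j)_{j=1}^n$ analyzed in \cref{theo:introVj,theo:asymptoticpropertiesofmomentsofVj}, and then extract the limit of the mean by bounded convergence. Let $M_n$ denote the random number of successfully decoded packets in one slot. By \cref{theo:introVj},
\begin{equation*}
M_n = \max\{\, m \in \{0,\ldots,n\} : V_j \ge c \text{ for all } j \le m \,\},
\end{equation*}
with the convention $M_n = 0$ when $V_1 < c$, so that $m_n(\gamma_n) = \mathrm{E}[M_n]$. Since $0 \le M_n/n \le 1$, bounded convergence reduces $m_n(\gamma_n) \sim \zeta n$ to proving $M_n/n \to \zeta$ in probability.

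For the \emph{upper bound}, I would fix an arbitrary $x > \zeta$. By the definition of $\zeta$ in \cref{eq:definitionofzeta}, there exists $z \in (\zeta, x)$ with $f_{\alpha,\xi}(z) < c$. From the explicit expression of $\mu_j(n) = \sum_k b_{jk}$ obtainable from \cref{eq:coeffbjkdef}, the pointwise convergence $\mu_{\lfloor zn \rfloor}(n) \to f_{\alpha,\xi}(z) < c$ follows directly. Combined with the index-wise concentration $V_j/\mu_j(n) \to 1$ in probability supplied by \cref{theo:asymptoticpropertiesofmomentsofVj} at $j = \lfloor zn \rfloor$, this forces $\mathcal{P}(V_{\lfloor zn \rfloor} < c) \to 1$, and by the characterization of $M_n$ above, $M_n/n \le z < x$ with probability tending to $1$.

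For the \emph{lower bound}, I would fix $x < \zeta$. On $(0, x]$ the function $f_{\alpha,\xi}$ exceeds $c$ by a uniform margin $\delta > 0$, with $f_{\alpha,\xi}(z) \to +\infty$ near $z = 0$ causing no difficulty. The delicate step is upgrading the index-wise concentration of \cref{theo:asymptoticpropertiesofmomentsofVj} to the simultaneous event $\{\min_{1 \le j \le \lfloor xn \rfloor} V_j \ge c\}$. Because each $V_j = \sum_k b_{jk} X_k$ is an affine combination of i.i.d.\ unit exponentials with coefficients given in \cref{eq:coeffbjkdef} scaling as $O(1/n)$, I would derive a Chernoff-type tail bound $\mathcal{P}(V_j < c) \le e^{-\Theta(n)}$ whenever $\mu_j(n) \ge c + \delta/2$, so that a union bound over the $\lfloor xn \rfloor$ indices still yields a vanishing probability.

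The main obstacle will be this uniformity step, and especially the treatment of indices $j$ approaching $\zeta n$ from below, where the margin $\mu_j(n) - c$ shrinks to zero and a purely Chebyshev-based estimate would be too weak to survive a union bound over $\Theta(n)$ indices. I expect this to be handled by sharpening the moment estimates from Appendix B to obtain a quantitative version of the convergence $\mu_j(n) \to f_{\alpha,\xi}(j/n)$ that is uniform in $j/n$ on any compact subinterval of $(0, \zeta)$, and combining it with the exponential-in-$n$ Chernoff tail above. Once the two bounds are in place, $M_n/n \to \zeta$ in probability, and bounded convergence closes the proof.
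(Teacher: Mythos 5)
Your proposal follows the same underlying strategy as the paper's Appendix C --- replace the profile $V_j$ by its deterministic limit $f_{\alpha,\xi}(j/n)$ and read off the threshold $\zeta$ --- but it is considerably more careful than what the paper actually writes down. The paper's proof is essentially heuristic: it asserts that the $V_j$ become deterministic, that the decoding condition "asymptotically" becomes $f_{\alpha,\xi}(j/n)\ge c$ for $j\le m$, and concludes $m/n\sim\zeta$; it never addresses the two issues you correctly single out as the real content, namely (i) upgrading index-wise convergence in probability to control of the joint event $\{V_j\ge c,\ \forall j\le \lfloor xn\rfloor\}$ over $\Theta(n)$ indices, and (ii) passing from $M_n/n\to\zeta$ in probability to convergence of the mean $m_n(\gamma_n)/n$. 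Your use of bounded convergence for (ii) and a Chernoff-plus-union-bound for (i) is the right way to make the argument rigorous, and your upper-bound half is clean. Two soft spots in your plan deserve attention. First, the uniform margin $\delta>0$ on $(0,x]$ for $x<\zeta$ need not exist in the boundary case where $f_{\alpha,\xi}$ attains the value $c$ exactly at an interior local minimum $x_1<\zeta$ (the paper's own discussion of the roots $x_1<x_2$ of $f'_{\alpha,\xi}$ shows this configuration occurs for critical parameter choices); there the indices near $x_1 n$ need the same delicate treatment as those near $\zeta n$. Second, the claimed $e^{-\Theta(n)}$ tail cannot hold uniformly down to $j=O(1)$: for fixed $j$ the leading coefficient $b_{jj}\approx 1/j$ is $\Theta(1)$ and the variance of $V_j$ stays bounded away from zero (Appendix~B computes $\sigma_j^2(n)\to \pi^2/6-\sum_{k<j}1/k^2$), so the lower tail is only polynomially small in $n$ there. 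This does not break your union bound --- a Chernoff estimate with tilt $s\asymp j$ gives $\mathcal{P}(V_j<c)\lesssim n^{-\Theta(j)}$ for small $j$ and exponential decay for $j=\Theta(n)$, which still sums to $o(1)$ --- but the statement "exponential whenever $\mu_j(n)\ge c+\delta/2$" as written is false and should be replaced by this $j$-dependent bound.
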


\begin{proof}
See Appendix C.
\end{proof}

The meaning of this result is that it is possible to successfully decode a substantial (i.e., proportional to $n$) amount of packets among those transmitted simultaneously by the $n$ nodes, provided that  \ac{SIC} is used and the target \ac{SNIR} is set to a sufficiently small value, namely a value inversely proportional to the number of concurrent transmissions.

The achieved sum-rate is defined as the spectral efficiency of the channel.
The sum-rate gives the achieved spectral efficiency of the multiple access channel in bit/s/Hz \cite{Razzaque2022}.
Given that $n$ nodes transmit in one slot, the sum-rate is evaluated as follows:
\begin{align}
U_n(\gamma) &= \frac{ \mathrm{E}[ \text{ \# delivered bits per unit time } | \, n, \gamma ] }{ \text{ Channel bandwidth } }  \nonumber \\
  &= \frac{ L \, \mathrm{E}[ \text{ \# delivered packets per slot } | \, n, \gamma ] / T( \gamma ) }{ W }  \nonumber \\
  &=  \log_2(1+\gamma) \, \mathrm{E}[ \text{ \# delivered packets per slot } | \, n, \gamma ]  \nonumber \\
  &=  \log_2(1+\gamma) m_n(\gamma) 	\label{eq:sumratealltx}
\end{align}

Setting $\gamma = \gamma_n = 1/(\alpha n)$ in \cref{eq:sumratealltx}, we have in the limit for large $n$:
\begin{equation}
U_n(\gamma_n) = \log_2\left( 1 + \frac{ 1 }{ \alpha n } \right) m_n(\gamma_n) \sim \frac{ 1 }{ \alpha n \log 2 } \cdot \zeta n
\end{equation}
hence
\begin{equation}
\label{eq:Uasy}
\lim_{ n \rightarrow \infty }{ U_n(\gamma_n) } = \frac{ \zeta }{ \alpha \log 2 } = U_\infty
\end{equation}

This result gives a pathway to choosing a value for $\alpha$.
Specifically, $\alpha$ can be set to that value that maximizes the asymptotic value of the sum-rate in the limit for large $n$.
More in depth, according to the statement of \cref{theo:mainresult}, $\zeta$ is a function of $\alpha$, i.e., $\zeta = \zeta(\alpha)$, where $\zeta$ is defined in \cref{eq:definitionofzeta}. 
Then, in view of \cref{eq:Uasy}, we set $\alpha$ to the value that maximizes the ratio $\zeta(\alpha)/\alpha$.

\cref{fig:Unftyvsalfa} shows the asymptotic value of the sum-rate $U_\infty$ as a function of $\alpha$ for two values of $\xi$, namely, $\xi = 0$ on the left and $\xi = 0.1$ on the right.
In both cases, an optimal configuration of the parameter $\alpha$ is identified, that maximizes the asymptotic value of the sum-rate.
However, it is apparent that even a relatively small residual interference (10\% for $\xi = 0.1$) makes the achievable asymptotic sum-rate drop from almost 4 bit/s/Hz to about 2 bit/s/Hz.

\begin{figure}
	\centering
	\subfloat[$\xi = 0$]{ \label{fig:Unftyvsalfaxi0}
	    \includegraphics[width=.45\columnwidth]{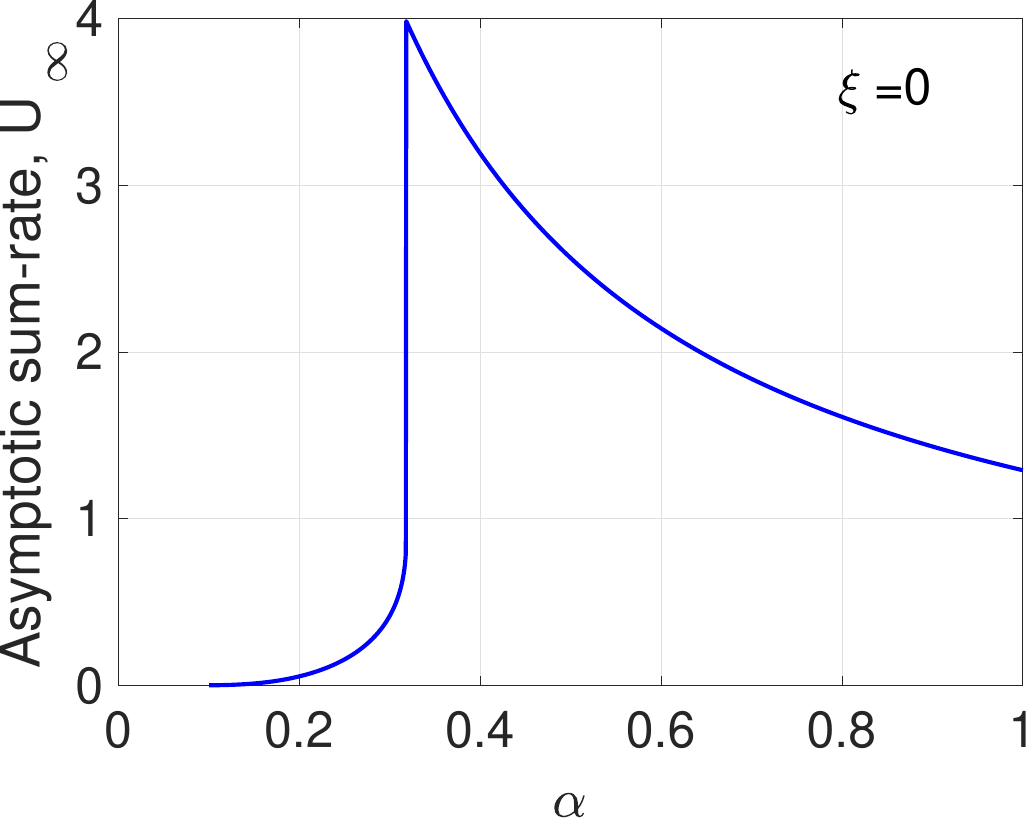} } \;%
	\subfloat[$\xi = 0.1$]{ \label{fig:Unftyvsalfaxi01} 
	    \includegraphics[width=.45\columnwidth]{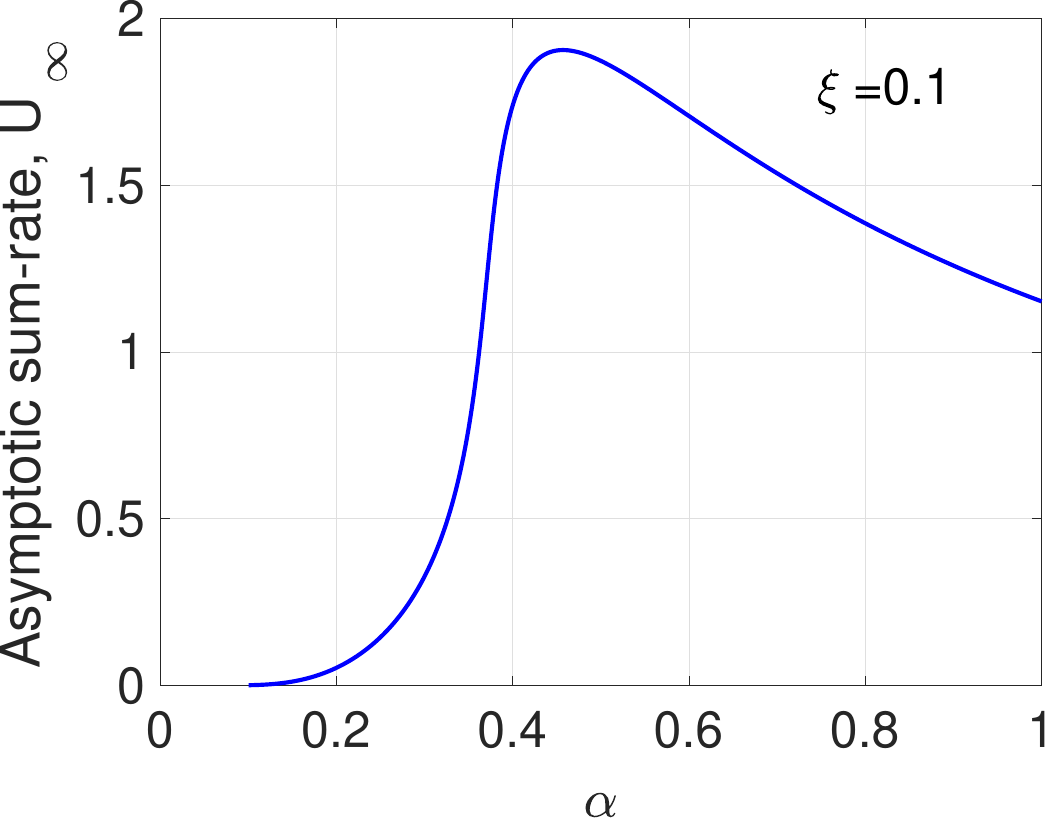} }
	\caption{Asymptotic value of the maximum sum-rate $U_\infty$ as a function of $\alpha$ for two values of $\xi$. The required \ac{SNIR} threshold is set to $\gamma = 1/( \alpha n )$.}
	\label{fig:Unftyvsalfa}
\end{figure}

The behavior of the optimal value of $\alpha$, denoted with $\alpha^*$, and the corresponding maximum value of the asymptotic sum-rate, denoted with $U_\infty^*$, are shown as a function of $\xi$ in \cref{fig:maxUasyandoptalfa}.

\begin{figure}
	\centering
	\subfloat[$\xi = 0$]{ \label{fig:optalfa}
	    \includegraphics[width=.45\columnwidth]{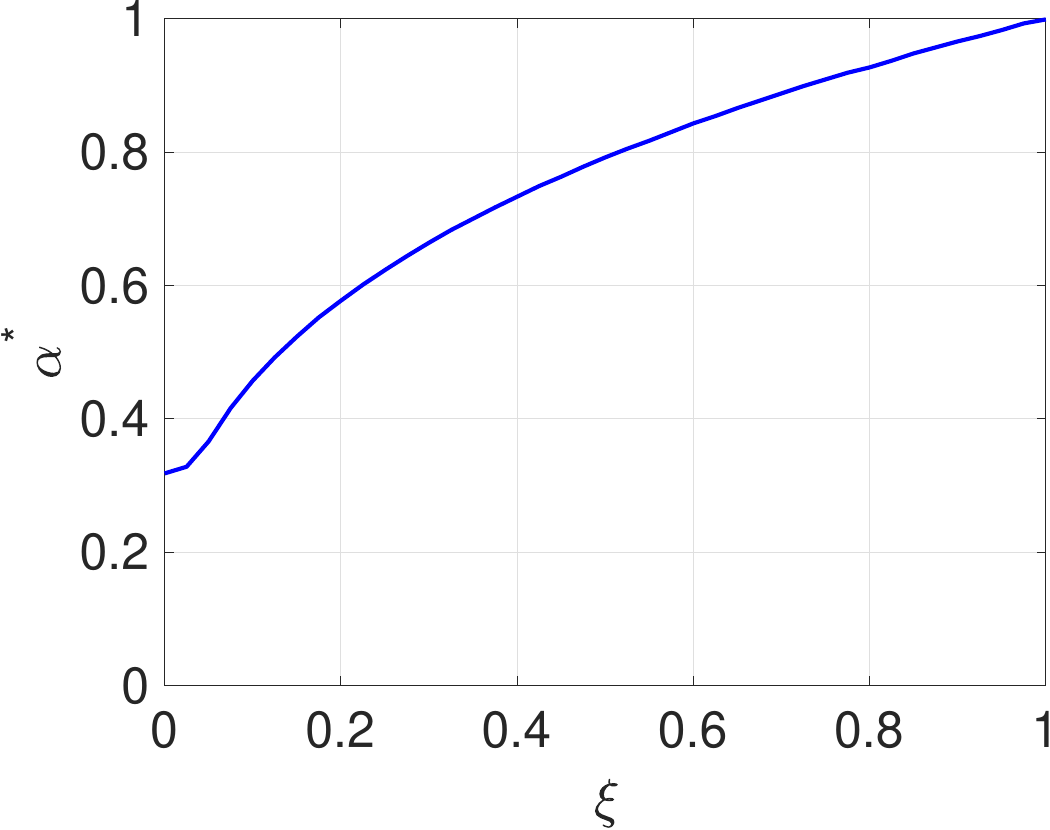} } \;%
	\subfloat[$\xi = 0.1$]{ \label{fig:maxUasy} 
	    \includegraphics[width=.45\columnwidth]{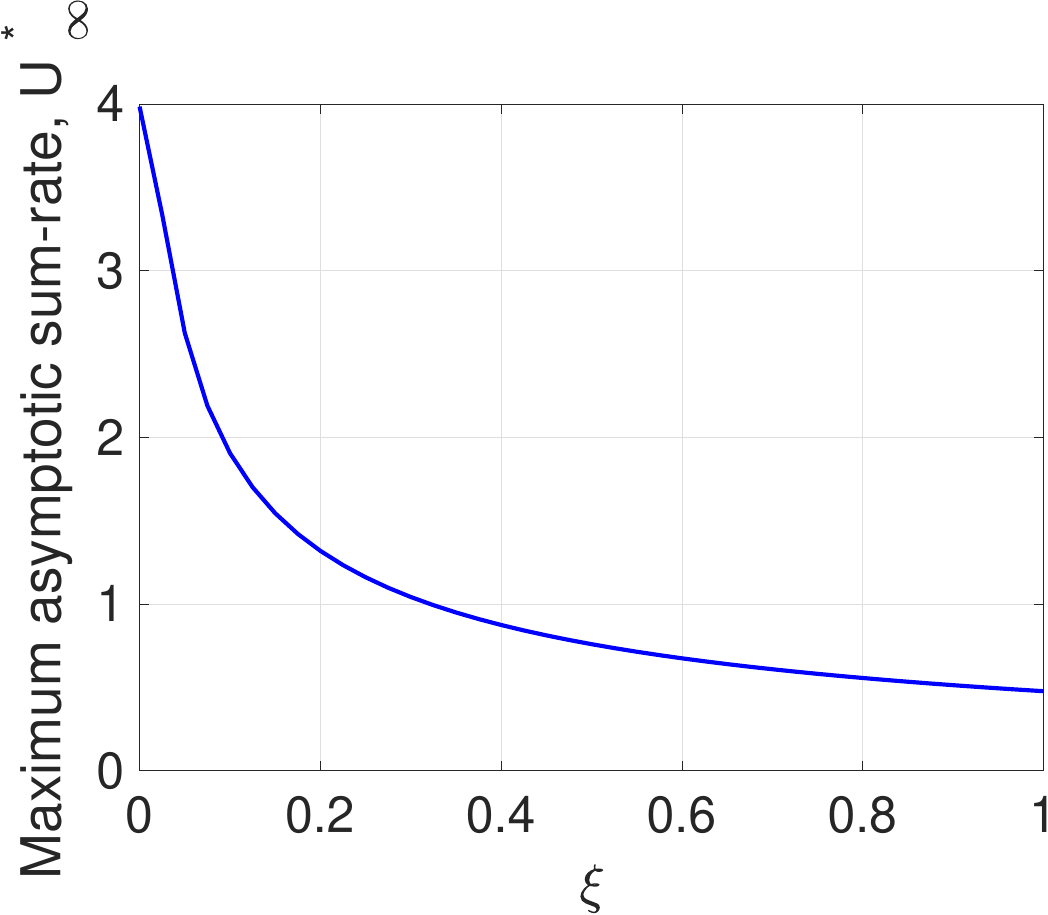} }
	\caption{Optimal configuration for maximum asymptotic sum-rate as a function of the coefficient $\xi$. (a) Optimal value of $\alpha$. (b) Maximum asymptotic sum-rate.}
	\label{fig:maxUasyandoptalfa}
\end{figure}

Finally, \cref{theo:caputrecase} gives an explicit form of the mean number of successfully decoded packets in case there is no \ac{SIC} (see \cref{eq:successfulnoSIC}).
This result is not new (e.g., see [CITATION]).
We report it here for ease of the reader, along with the asymptotic regime.

\begin{theorem}
\label{theo:caputrecase}
Let us consider $n$ concurrent transmissions towards a single receiver that can only exploit capture (no \ac{SIC}) with a target \ac{SNIR} $\gamma$.
The mean number of successfully decoded packets $m_n(\gamma)$ is given by
\begin{equation}
\label{eq:mnofgammaonlycapture}
m_n(\gamma) = e^{ - \gamma P_N / P_0 } \frac{ n }{ ( 1 + \gamma )^{n-1} } =  \frac{ ( 1 - \epsilon ) n }{ ( 1 + \gamma )^{n-1} }
\end{equation}
in view of the setting $P_0/P_N = -\gamma/\log( 1 - \epsilon )$.
With $\gamma = 1/( \alpha n )$, it is
\begin{equation}
U_\infty = \lim_{ n \rightarrow \infty }{ U_n } 
= \frac{ (1-\epsilon) }{ \log 2 } \frac{ 1 }{ \alpha } e^{ -1/\alpha }
\end{equation}
\end{theorem}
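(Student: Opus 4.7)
The plan is to exploit linearity of expectation together with the independence of $Y_j$ from the sum of the other channel gains, so that the problem reduces to one simple Laplace-transform evaluation.

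\textbf{Step 1 (rewrite the decoding event).} Starting from \cref{eq:successfulnoSIC} and multiplying through by the denominator, the event ``packet $j$ is successfully decoded'' is equivalent to
\begin{equation*}
Y_j \;\ge\; c \;+\; \gamma \!\!\!\sum_{r=1,\,r\ne j}^{n}\!\!\! Y_r,
\end{equation*}
having used $S_0 = P_0/P_N = \gamma/c$ from \cref{eq:targetaveragerxSNR} to turn $\gamma/S_0$ into $c$. This is the cleanest form: a single exponential on the left, independent of the right-hand side.

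\textbf{Step 2 (per-packet success probability).} Let $T_j = \sum_{r\ne j} Y_r$. Since $Y_j$ is a unit-mean exponential independent of $T_j$, conditioning on $T_j$ and using $\mathcal{P}(Y_j \ge t) = e^{-t}$ gives
\begin{equation*}
p_j \;=\; \mathrm{E}\!\left[e^{-c-\gamma T_j}\right] \;=\; e^{-c}\,\varphi_{T_j}(\gamma),
\end{equation*}
where $\varphi_{T_j}$ is the Laplace transform of $T_j$. Because $T_j$ is the sum of $n-1$ i.i.d.\ unit-mean exponentials, $\varphi_{T_j}(\gamma) = (1+\gamma)^{-(n-1)}$. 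Recalling $e^{-c}=1-\epsilon = e^{-\gamma P_N/P_0}$, I get $p_j = (1-\epsilon)/(1+\gamma)^{n-1}$, identical for every $j$.

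\textbf{Step 3 (mean by linearity) and asymptotics.} By linearity of expectation applied to the indicators of the $n$ decoding events (which need \emph{not} be independent, but that is irrelevant for the mean), $m_n(\gamma)=\sum_{j=1}^n p_j = n\,(1-\epsilon)/(1+\gamma)^{n-1}$, which is \cref{eq:mnofgammaonlycapture}. Finally, substituting $\gamma_n = 1/(\alpha n)$ into $U_n(\gamma_n) = \log_2(1+\gamma_n)\,m_n(\gamma_n)$ and using the two elementary limits $\log_2(1+1/(\alpha n)) \sim 1/(\alpha n\log 2)$ and $(1+1/(\alpha n))^{n-1}\to e^{1/\alpha}$ yields $U_\infty = (1-\epsilon)\,e^{-1/\alpha}/(\alpha\log 2)$, as claimed.

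\textbf{Main obstacle.} There is essentially no analytical obstacle here: no order statistics, no SIC coupling, and no phase-transition argument (unlike \cref{theo:mainresult}). The only subtle point worth flagging in the write-up is that although the success events $\{Y_j \ge c+\gamma T_j\}$ for different $j$ are statistically dependent (they share the $Y_r$'s), linearity of expectation bypasses this dependence completely; one must resist the temptation to use independence where it does not hold, but need not establish it where it is not needed.
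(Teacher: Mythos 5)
Your proof is correct. Note that the paper itself gives no proof of \cref{theo:caputrecase} --- it is stated as a known result ``reported for ease of the reader,'' with no corresponding appendix --- so there is no authorial argument to compare against; your derivation (rewrite \cref{eq:successfulnoSIC} as $Y_j \ge c + \gamma T_j$, condition on $T_j$ to get $p_j = e^{-c}\varphi_{T_j}(\gamma) = (1-\epsilon)(1+\gamma)^{-(n-1)}$, sum by linearity of expectation, then take $\gamma_n = 1/(\alpha n)$) is the standard one the authors evidently have in mind, and every step, including the remark that dependence among the success events is irrelevant for the mean, is sound.
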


Setting $\gamma = 1/(\alpha n )$, it is easy to derive the asymptotic expansion $m_n(1/(\alpha n)) \sim n (1-\epsilon) \exp(-1/\alpha)$.
The asymptotic sum-rate in this case is maximized for $\alpha = 1$.
The maximum asymptotic sum-rate is $U_\infty^* = ( 1 - \epsilon )/( e \log 2)$.

The model with no \ac{SIC} applies to those multiple access schemes that mitigate interference, yet they do not cancel it actively, e.g., Code Division Multiple Access [REF] or Golden Modulation based on Zadoff-Chu sequences [REF].
As a matter of example, in the former case $\gamma = \gamma^\prime/PG$, where $\gamma^\prime$ is the target narrowband \ac{SNR} of a node (it is typically $\gamma^\prime \gg 1$) and $PG$ is the processing gain (typically $PG \gg 1$).

%

\section{Generalization}
\label{sec:discussion}

Let us now generalize the model of the random variable $S_i$, that represents the \ac{SNR} of node $i$.
Let us write $S_i = S_0 Y_i$, where $S_0$ is the \ac{SNR} level required to guarantee successful reception of a single transmitter with probability $1-\epsilon$.
In the analysis developed in previous Sections it has been assumed that $Y_i$ be a negative exponential random variable with mean 1, representing the Rayleigh fading gain.
We now generalize the analysis, by allowing $Y_i$ to be a general non-negative random variable, with continuous \ac{PDF}, still constrained to have mean equal to 1.
Sources of randomness in the description of $Y_i$ come from node path loss statistics and from the selected transmission power level.

Given $n$ nodes transmitting, we assume $Y_1,\dots,Y_n$ are \ac{i.i.d.} random variables with \ac{CDF} $F(t) = \mathcal{P}( Y \le t )$, \ac{CCDF} $G(t) = 1 - F(t)$, and \ac{PDF} $f(t)$.

The target average received \ac{SNR} $S_0$ is set so that $\mathcal{P}( S_0 Y \ge \gamma ) \ge 1 - \epsilon$.
Since $\mathcal{P}( Y > t ) = G(t)$, we have $G( \gamma/S_0 ) = 1-\epsilon$, i.e., $S_0 = \gamma/G^{-1}(1-\epsilon)$.
To keep notation uniform with the one used in previous section, we let $c = G^{-1}(1-\epsilon)$, so that we still wrtie $S_0 = \gamma/c$.
In case of Rayleigh fading it is $G(t) = e^{-t}$.
hence, $G^{-1}(y) = -\log y$, and we recover $c = -\log(1-\epsilon)$ (see \cref{eq:targetaveragerxSNR}).

The order statistics associated with $Y_1,\dots,Y_n$ is denoted with $Y_{(1)},\dots,Y_{(n)}$, where labeling is assigned so that $Y_{(1)} \ge \dots \ge Y_{(n)}$, i.e., in descending order.
It is known that \cite{Gnedenko1998}
\begin{equation}
f_{Y_{(h)}}(u) = \frac{ n! }{ (n-h)! (h-1)! } \left[ G(u) \right]^{h-1} \left[ 1 - G(u) \right]^{n-h} f(u)
\end{equation}
for $u \ge 0$ and $h = 1,\dots,n$.
The mean value of $Y_{(h)}$ is
\begin{align}
\mu_h &= \mathrm{E}[ Y_{(h)} ] = \int_{ 0 }^{ \infty }{ u f_{Y_{(h)}}(u) \, du }   \nonumber  \\
    &= n \int_{ 0 }^{ \infty }{ u \binom{ n-1 }{ h-1 } \left[ G(u) \right]^{h-1} \left[ 1 - G(u) \right]^{n-h} f(u) \, du }   \nonumber  \\
    &= n \int_{ 0 }^{ 1 }{ G^{-1}(u) \binom{ n-1 }{ h-1 } u^{h-1} (1 - u )^{n-h} \, du }  \\
    &= \int_{ 0 }^{ 1 }{ G^{-1}(u) \beta_{h,n-h+1}(u) \, du }
\end{align}
where $\beta_{a,b}(u)$ is the Beta \ac{PDF} with parameters $a$ and $b$, for $u \in [0,1]$ and $a,b > 0$.

Packet from node $h$ is decoded correctly if all packets sent by nodes whose signal was stronger (nodes from 1 to $h-1$) have been decoded correctly and if the \ac{SNIR} of node $h$ exceeds the threshold $\gamma$, i.e.,
\begin{equation}
\frac{ S_0 Y_{(h)} }{ 1 + \sum_{ i = h+1 }^{ n }{ S_0 Y_{(i)} } } \ge \gamma
\end{equation}

This inequality can be re-arranged as follows:
\begin{equation}
Y_{(h)} - \gamma \sum_{ i = h+1 }^{ n }{ Y_{(i)} } \ge c 
\end{equation}
where we have used the equality $S_0 = \gamma/c$.

For large $n$, we expect that the random variable on the left-hand side collapses to a deterministic distribution, hence it is equal to its mean with probability 1.
Let us therefore turn to the study of the inquality obtained by replacing the random variable on the left-hand side with its \emph{mean}.
\begin{equation}
\label{eq:inequalityofmeansgeneralcase}
\mu_h - \gamma \sum_{ i = h+1 }^{ n }{ \mu_i  } \ge c 
\end{equation}

We consider the asymptotic regime where $h \sim n x$, for a fixed $x \in (0,1)$, and $\gamma = 1/(\alpha n)$, with $\alpha > 0$.
Then, we have
\begin{align}
\mu_n &= \int_{ 0 }^{ 1 }{ G^{-1}(u) \beta_{n x,n(1-x)+1}(u) \, du }   \nonumber  \\
    &\sim  \int_{ 0 }^{ 1 }{ G^{-1}(u) \beta_{n x,n(1-x)}(u) \, du }
\end{align}
where we account for the fact that $n(1-x)+1 \sim n(1-x)$ for large $n$ and fixed $x$.

We use the following asymptotic property of the beta probability distribution:
\begin{equation}
\beta_{n a, n b}(u) \sim \mathcal{N}\left( \frac{ a }{ a+b } , \frac{ 1 }{ n } \, \frac{ a b }{ ( a + b )^2 } \right)
\end{equation}
Therefore, in our case we get
\begin{equation}
\beta_{n x, n (1-x)}(u) \sim \mathcal{N}\left( x , \frac{ x (1-x) }{ n } \right)
\end{equation}
This shows that as $n \rightarrow \infty$, the scaled Beta distribution tends to a Dirca impulse centered in $x$.
As a consequence, we find as $n$ tends to infinity:
\begin{equation}
\label{eq:inequalityonmeanvalues}
\mu_n \rightarrow \int_{ 0 }^{ 1 }{ G^{-1}(u) \delta( u - x ) \, du } = G^{-1}(x)
\end{equation}

As for the sum appearing on the right-hand side of \cref{eq:inequalityofmeansgeneralcase}, we have, for $\gamma = 1/( \alpha n )$,
\begin{equation}
\label{eq:expressionofthesum}
\gamma \sum_{ i = h+1 }^{ n }{ \mu_i  } = \frac{ 1 }{ \alpha } \int_{ 0 }^{ 1 }{  G^{-1}(u) \sum_{ i = h }^{ n-1 }{ B_{u,n-1}(i) } \, du }
\end{equation}
where
\begin{equation}
\label{eq:binompdfB}
B_{u,n-1}(i) = \binom{ n-1 }{ i } u^i (1-u)^{n-1-i} \, , \quad i = 0,1,\dots,n-1,
\end{equation}
is a Binomial probability distribution with parameters $u$ and $n-1$.
Let $Z$ denote the binomial random variable with probability distribution in \cref{eq:binompdfB}.
Then, we have
\begin{equation}
\sum_{ i = h }^{ n-1 }{ B_{u,n-1}(i) } = \mathcal{P}( Z \ge h )
\end{equation}

In the considered asymptotic regime, it is $h \sim n x$.
Moreover, we use the following asymptotic property of the Binomial probability distribution:
\begin{equation}
\label{eq:binomialasy}
\lim_{ n \rightarrow \infty }{ \mathcal{P}\left( \frac{ Z - n u }{ \sqrt{ n u ( 1-u ) } } \le t \right) }= \Phi(t)
\end{equation}
where $\Phi(t)$ is the \ac{CDF} of the standard Gaussian random variable:
\begin{equation}
\Phi(t)  = \frac{ 1 }{ \sqrt{ 2 \pi } } \int_{ -\infty }^{ t }{ e^{ - u^2 / 2 } \, du }
\end{equation}
This stems from the Central Limit Thoerem and from the fact that
\begin{equation}
Z = \sum_{ k = 1 }^{ n-1 }{ B_k }
\end{equation}
where $B_k$ is a Bernoulli random variable that equals 1 with probability $u$ and it is equal to 0 with probability $1-u$.

From \cref{eq:binomialasy}, we derive that
\begin{equation}
\mathcal{P}( Z \le n x ) \sim \Phi\left( \frac{ n x - n u }{ \sqrt{ n u ( 1-u ) } } \right) = \Phi\left( \sqrt{n} \frac{ x - u }{ \sqrt{ u ( 1-u ) } } \right) 
\end{equation}

As $n$ gets large, it is seen that $\Phi\left( \sqrt{n} \frac{ x - u }{ \sqrt{ u ( 1-u ) } } \right)  \rightarrow 0$, if $x < u$, whereas $\Phi\left( \sqrt{n} \frac{ x - u }{ \sqrt{ u ( 1-u ) } } \right)  \rightarrow 1$, if $x > u$.
Therefore, the probability of the event $Y \le h = n x$ tends to a step function with jump located at $x$ as $n$ grows.

Summing up, we have
\begin{equation}
\sum_{ i = n x }^{ n-1 }{ B_{u,n-1}(i) } = \mathcal{P}( Y \ge n x ) \rightarrow \begin{cases}
    1 & u > x \text{ }, \\
    0  & u < x.
\end{cases}
\end{equation}
Inserting this asymptotic result in the integral in \cref{eq:expressionofthesum}, we have as $n$ tends to infinity:
\begin{equation}
\gamma \sum_{ i = n x+1 }^{ n }{ \mu_i  } \rightarrow \frac{ 1 }{ \alpha } \int_{ x }^{ 1 }{  G^{-1}(u) \, du }
\end{equation}
In the end, the inequality on the mean values in \cref{eq:inequalityonmeanvalues} approaches the following limiting inequality in the asymptotic regime
\begin{equation}
\label{eq:inequalityonmeanvaluesasymptotic}
G^{-1}(x) - \frac{ 1 }{ \alpha } \int_{ x }^{ 1 }{  G^{-1}(u) \, du } \ge c
\end{equation}
for a given $x \in (0,1)$.
The quantity $x$ represents the fraction of correctly decoded packets.
The asymptotic sum rate is $U_\infty = \frac{ x^*(\alpha) }{ \alpha \log 2 }$, where $x^*(\alpha)$ is the smallest value of $x \in (0,1)$ such that the inequality in \cref{eq:inequalityonmeanvaluesasymptotic}  is satisfied.

In the special case where $Y$ is a negative exponential random variable with mean 1, it is $G(t) = e^{-t}$, hence $G^{-1}(u) = - \log u$.
It can be verified that \cref{eq:inequalityonmeanvaluesasymptotic} reduces to \cref{eq:definitionfofx}.

\cref{eq:inequalityonmeanvaluesasymptotic} can be transformed as follows:
\begin{equation}
\label{eq:finalinequalitygeneralized}
y - \frac{ 1 }{ \alpha } \int_{ 0 }^{ y }{  u f(u) \, du } \ge c = G^{-1}( 1 - \epsilon )
\end{equation}
where $y =  G^{-1}(x)$, hence it is $x = G(y)$.
The fraction of transmission that can be correctly decoded is $x^* = G(y^*)$, with $y^*$ equal to the largest value of $y$ such that the left-hand side of \cref{eq:finalinequalitygeneralized} is less than $c$.
Formally,
\begin{equation}
y^* = \inf \left\{ y : z \ge c + \int_{0}^{z}{ u f(u) du }, \forall z \in (y,\infty) \right\}
\end{equation}

Since $f(u) \ge 0$ and the mean of the \ac{PDF} $f(u)$ is constrained to be 1, it is easy to verify that it must be $c \le y^* \le c+1/\alpha$.
Since $x^* = G(y^*)$ and $G(\cdot)$ is monotonously decreasing, we have $G\left( G^{-1}( 1 - \epsilon ) + 1/\alpha \right) \le x^* \le G\left( G^{-1}( 1 - \epsilon ) \right) = 1-\epsilon$.

To provide a numerical example, we assume $Y$ has a Gamma \ac{PDF}, one of the simplest probability distributions that provides two degrees of freedom.
We set the mean of $Y$ to 1 and let $\eta$ denote the reciprocal of the \ac{SCOV} of $Y$, i.e., $\eta$ is the ratio of the mean of $Y$ squared to the variance of $Y$.
As $\eta$ gets smaller, the random variable $Y$ exhibits more variability, which is deemed to be favorable to the success of \ac{SIC} decoding.

Formally, for any $\eta > 0$, we have
\begin{equation}
f(t) = \frac{ \eta^\eta t^{\eta-1} }{ \Gamma(\eta) } e^{ - \eta t } \, , \quad t > 0.
\end{equation}
where $\Gamma(a)$ is the Euler Gamma function, defined by $\Gamma(a) = \int_{ 0 }^{ \infty }{  u^{a-1} e^{-u} \, du }$ for $a > 0$.
With this \ac{PDF}, \cref{eq:finalinequalitygeneralized} specializes into
\begin{equation}
\label{eq:yequatiowithgammapdf}
y - \frac{ 1 }{ \alpha } \frac{ \Gamma( \eta+1 , \eta y ) }{ \Gamma( \eta+1) } \ge c = \frac{ \Gamma^{-1}\left( \eta, \epsilon \Gamma(\eta) \right) }{ \eta }
\end{equation}
where $\Gamma(a,x)$ is the incomplete Euler Gamma Function, given by $\Gamma(a,x) = \int_{ 0 }^{ x }{ u^{a-1} e^{-u} \, du }$ for $a,x > 0$, and $\Gamma^{-1}(a,\cdot)$ is its inverse with respect to the second variable, $x$.
Note $\Gamma(a) = \Gamma(a,\infty)$ and $\Gamma(a,x)$ is a monotonously increasing function of $x$ for any positive $a$.
We have also used the equality $\Gamma(\eta+1) = \eta \Gamma(\eta)$, holding for any $\eta > 0$.

Exploiting the equation $U_\infty = x^*/( \alpha \log(2) )$, it is possible to find numerical values of the asymptotic sum-rate $U_\infty$ as a function of $\alpha$, for several values of the \ac{SCOV} of $Y$ (see \cref{fig:Uinfvsalfageneralized}).

\begin{figure}
	\centering
	\subfloat{ \label{fig:Uinfvsalfageneralized}
	    \includegraphics[width=.45\columnwidth]{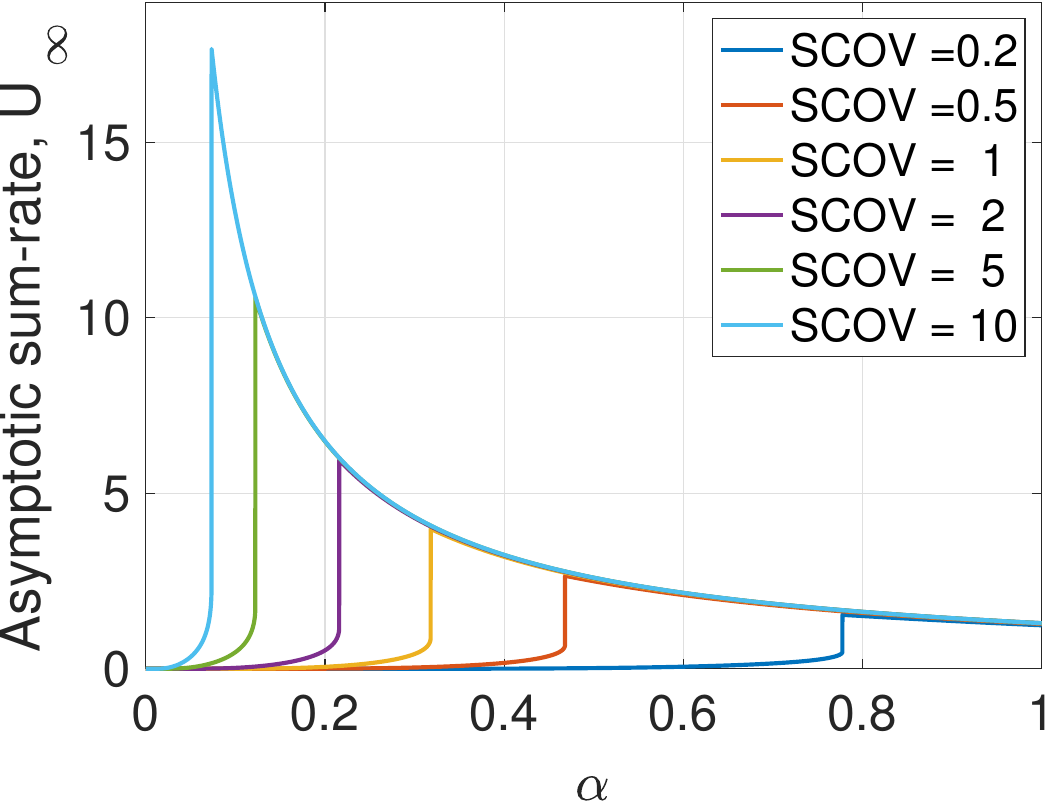} } \;%
	\subfloat{ \label{fig:UinfstarvsSCOVgeneralized} 
	    \includegraphics[width=.45\columnwidth]{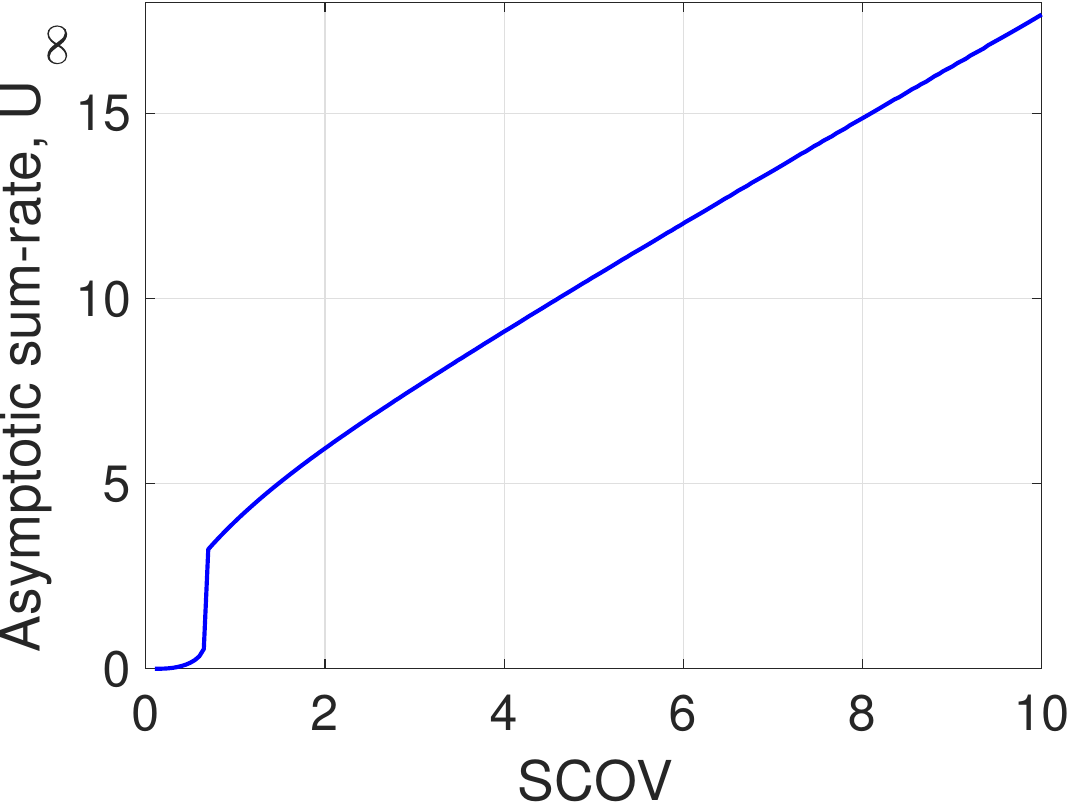} }
	\caption{Analysis of the asymptotic sum-rate in case of generalized fading with Gamma \ac{PDF}. 
	(a) Asymptotic sum-rate as a function of $\alpha$ for several values of the \ac{SCOV} of $Y$. (b) Maximum achievable asymptotic sum-rate as a function of the \ac{SCOV} of $Y$.}
	\label{fig:sumrategeneralized}
\end{figure}

It appears that, for any fixed value of the \ac{SCOV}, there is an optimal value of $\alpha$ that maximizes the asymptotic sum-rate.
The behavior of the maximum achievable asymptotic sum-rate as a function of the \ac{SCOV} is shown in \cref{fig:UinfstarvsSCOVgeneralized}. 
This results confirm the intuition that suggests a higher variability of $Y$ (hence of the received power level) is beneficial to the success of \ac{SIC} receiver.

In General, we can write $Y = X Z$, where $X$ represents the random component of the block fading channel and $Z$ stems from a randomization of the transmission power level.
We remind that it must be $\mathrm{E}[ Y ] = 1$.
If we assume Rayleigh fading, so that $\mathcal{P}( X > u ) = e^{-u}$, then it must be $\mathrm{E}[ Z ] = 1$.
With a two level choice for transmission power, we have $Z = 1/b$ with probability $b/(1+b)$ and $Z = b$ with probability $1/(1+b)$, for any $b \ge 1$.
It can be verified that $\mathrm{E}[ Y^2 ] = 2 \mathrm{E}[ Z^2 ] = 2 \left( b + 1/b - 1 \right)$.
Hence, the \ac{SCOV} of $Y$ is $2 \left( b + 1/b - 1 \right) - 1$, which can be made as large as desired with growing values of $b$.
To ease transmission power setting, the nodes could be instructed to opt for the largest transmission power level in case their average channel gain is above a suitable threshold, for the lower value otherwise.
Setting the threshold properly assigns probability of using either level as desired (i.e., so that the average of $Z$ be 1).

%

\section{Conclusions}
\label{sec:conclusions}

We have characterized the asymptotic behavior of the mean number of correctly decoded packets and the ensuing sum-rate of a \ac{SIC} receiver with $n$ transmitting nodes and a target \ac{SNIR} that scales in inverse proportion to $n$
The considered \ac{SIC} is quite general, encompassing any kind of random fading.
The average received power level is assumed to be same for all nodes.
It is found that the random variable representing the number of correctly decoded packets converges to a deterministic quantity as $n \rightarrow \infty$.
The asymptotic regime is fully characterized.

Two major points for further work stem out of the result of this paper.
First, the generalization to systems where nodes play with transmission power levels, so as to obtain different  average received power levels, are of interest.
In fact, \ac{SIC} performance are expected to improve, if superposed signals have well separated reception power levels.
Secondly, practical techniques for realizing \ac{SIC} should be evaluated.
Referring to specific \ac{SIC} algorithms allows the identification of a refined reception model.
It could also provide evidence of the feasibility of large scale interference cancellation receivers.

\appendices

\section{Proof of bounds on $m_k(\gamma)$}
\label{app:Appboundsonmkgamma}
The lower bound is obtained by assuming no \ac{SIC} is carried out.
Let $k$ nodes transmit in a slot and let $S_h = Y_h \gamma/c$ the \ac{SNR} of node $h$ at the receiving \ac{BS}, where $Y_h$ is the Rayleigh fading gain.

With no \ac{SIC}, the \ac{SNIR} of node $h$ at the receiver is
\begin{equation}
\frac{ Y_h \gamma / c }{ 1 + \sum_{ j \ne h }{ Y_j \gamma / c } } \le \gamma
\end{equation}
Node $h$ is decoded successfully if 
\begin{equation}
Y_h \ge c + \gamma \sum_{ j \ne h }{ Y_j } 
\end{equation}
where all $Y_i$'s are i.i.d. negative exponential random variables with mean 1.
Let us define the following events for $h = 1,\dots,k$:
\begin{equation}
E_h = \left\{ Y_h \ge c + \gamma \sum_{ j \ne h }^{ k }{ Y_j } \right\}
\end{equation}

Conditioning on $Y_j = y_j$ for $j \ne h$, we have
\begin{equation}
\mathcal{P}\left( Y_h \ge c + \gamma \sum_{ j \ne h }{ y_j } \right) = e^{ - c } \prod_{ j \ne h }{ e^{ - \gamma y_j } }
\end{equation}
Removing the conditioning and reminding that $c = - \log(1-\epsilon)$, we find
\begin{equation}
\mathcal{P}( E_h ) = \frac{ 1-\epsilon }{ (1+\gamma)^{k-1} }
\end{equation}
The number of successfully decoded packets is the sum of indicator functions of the events $Y_h \ge c + \gamma \sum_{ j \ne h }{ Y_j }$, for $h = 1,\dots,n$.
Hence, the mean number of successfully decoded packets is
\begin{align}
m_k(\gamma) &\ge m^{\text{no-SIC}}_k(\gamma) = \mathrm{E}\left[ \sum_{ h = 1 }^{ n }{ I( E_h ) } \right]   \nonumber \\
  &= \sum_{ h = 1 }^{ n }{ \mathcal{P}( E_h ) }  = \frac{ k (1-\epsilon) }{ (1+\gamma)^{k-1} }  \label{eq:lowerboundonmkgamma}
\end{align}
where $I(E)$ is the indicator function of event $E$, equal to 1 if and only if event $E$ occurs.

As for the upper bound, let us consider the order statistics associated with the i.i.d. random variables $Y_j$, denoted as $\tilde{Y}_j$, with $\tilde{Y}_1 \ge \tilde{Y}_2 \ge \dots \ge \tilde{Y}_k$. 

At least the $h$ strongest packets are successfully decoded with \ac{SIC}, if the following inequalities are simultaneously satisfied:
\begin{equation}
\tilde{Y}_r \ge c + \gamma \sum_{ j = r+1 }^{ k }{ \tilde{Y}_j } \, , \qquad r = 1,\dots,h. 
\end{equation}
Let us define the following events for $h = 1,\dots,k$:
\begin{equation}
\tilde{E}_h = \left\{ \tilde{Y}_h \ge c + \gamma \sum_{ j = h+1 }^{ k }{ \tilde{Y}_j } \right\} 
\end{equation}
Let also $\tilde{D}_k$ be the number of successfully decoded packets with \ac{SIC}, given that $k$ nodes transmit.
For each $h = 1,\dots,k$, we have
\begin{equation}
\mathcal{P}( \tilde{D}_k \ge h ) = \mathcal{P}\left( \bigcap_{ r = 1}^{ h }{ \tilde{E}_r } \right) \le \mathcal{P}( \tilde{E}_1 )
\end{equation}
From the very definition of the event $\tilde{E}_1$ and $E_h, \, h = 1,\dots,k$, it follows that $\tilde{E}_1 \subseteq \bigcup_{ h = 1 }^{ k }{ E_h }$.
Then, exploiting the union bound, we find:\footnote{Based on the representation of the order statistics $\tilde{Y}_1,\dots,\tilde{Y}_k$ given in Appendix A of \cite{ASYMPTOTICPAPERARXIV}, it is possible to show that $\mathcal{P}( \tilde{E}_1 ) = m^{\text{no-SIC}}_k(\gamma)$ in case of Rayleigh fading.}
\begin{align*}
\mathcal{P}( \tilde{D}_k \ge h ) &\le \mathcal{P}( \tilde{E}_1 ) \le \mathcal{P}\left( \bigcup_{ h = 1 }^{ k }{ E_h } \right)   \\
  &\le \sum_{ h = 1 }^{ k }{ \mathcal{P}( E_h ) } = m^{\text{no-SIC}}_k(\gamma)
\end{align*}
Finally, we have
\begin{align}
m_k(\gamma) &= \mathrm{E}[ \tilde{D}_k ] = \sum_{ h = 1}^{ k }{ \mathcal{P}( \tilde{D}_k \ge h ) }   \nonumber \\
  &\le k \, m^{\text{no-SIC}}_k(\gamma) = \frac{ k^2 (1-\epsilon) }{ (1+\gamma)^{k-1} }  \label{eq:upperboundonmkgamma}
\end{align}

This upper bound proves that
\begin{equation}
\limsup_{ k \rightarrow \infty }{ m_k(\gamma) } = 0
\end{equation}
Moreover, relaxing the integer $k$ to a real positive variable, it can be verified that the rightmost term in \cref{eq:upperboundonmkgamma} is maximized for $k = 2/\log(1_\gamma)$.
Hence:
\begin{equation}
\label{eq:maximumofmkofgamma}
m_k(\gamma) \le M(\gamma) = \frac{ (1-\epsilon) (1+\gamma) }{ e^2 \left[ \log(1+\gamma) \right]^2 } \, , \qquad \forall k.
\end{equation}

Exploiting those bounds, we can find bounds for the mean number $M_n(p,\gamma)$ of successfully decoded packets in a slot, given that $n$ nodes are backlogged at the beginning of the slot.
Since it is
\begin{equation}
M_n(p,\gamma) = \sum_{ k = 1 }^{ n }{ m_k(\gamma) \binom{n}{k} p^k (1-p)^{n-k} }
\end{equation}
using \cref{eq:lowerboundonmkgamma,eq:upperboundonmkgamma}, we have:
\begin{equation}
A_n( p , \gamma ) \le M_n( p, \gamma ) \le A_n( p , \gamma ) + B_n( p , \gamma )
\end{equation}
with
\begin{align*}
&A_n( p , \gamma ) = ( 1 - \epsilon ) n p \left( 1 - \frac{ \gamma }{ 1 + \gamma } p \right)^{n-1}   \\
&B_n( p, \gamma ) = ( 1 - \epsilon ) \frac{ n (n-1) p^2 }{ 1 + \gamma } \left( 1 - \frac{ \gamma }{ 1 + \gamma } p \right)^{n-2}
\end{align*}

The value of $p$ at which the lower bound is maximized is easily found to be $p^* = \frac{ 1 + \gamma }{ n \gamma }$, for $n \ge 1 + 1 /\gamma$.
With this value of $p$, the bounds on $M_n( p , \gamma )$ are as follows:
\begin{equation}
\label{eq:finitenboundsonMnpandgamma1}
1 \le \frac{ M_n\left( \frac{ 1 + \gamma }{ n \gamma } , \gamma \right) }{ A_n\left( \frac{ 1 + \gamma }{ n \gamma } , \gamma \right) } \le  1 + \frac{ 1 }{ \gamma } 
\end{equation}
with
\begin{equation}
\label{eq:finitenboundsonMnpandgamma2}
A_n\left( \frac{ 1 + \gamma }{ n \gamma } , \gamma \right)  = ( 1 - \epsilon ) \frac{ 1 + \gamma }{ \gamma } \left( 1 - \frac{ 1 }{ n } \right)^{n-1} 
\end{equation}

As $n \rightarrow \infty$, we get finally:\footnote{Under any scaling of the type $p_n = a/n$ for positive $a$, the limit of $M_n(p_n,\gamma)$ for fixed $\gamma$ exists, since the binomial distribution with parameters $n$ and $p_n = a/n$ tends to a Poisson distribution with mean $a$.}
\begin{equation}
\label{eq:asymptoticboundsonMnpandgamma}
 1 \le \frac{ \lim_{ n \rightarrow \infty }{ M_n\left( \frac{ 1 + \gamma }{ n \gamma } , \gamma \right) } }{ ( 1 - \epsilon ) \frac{ 1 + \gamma }{ \gamma \, e } } \le 1 + \frac{ 1 }{ \gamma } 
\end{equation}

These inequalities are especially useful when $\gamma \gg 1$.

\section{Proof of instability with fixed parameters}
\label{app:AppUfixedparameters}
In this Appendix we give a proof of the instability of $Q(t)$ in case $p_n$ and $\gamma_n$ are set to constant values $p$ and $\gamma$ respectively for all $n$.
We will prove that, under this setting, $U_n(p,\gamma) \rightarrow 0$ as $n \rightarrow \infty$.
Then, the conditional drift is positive for all $n$, but at most a finite set of values, for any positive $\Lambda$.

Given an arbitrary $\varepsilon > 0$, let $K_\varepsilon = n p  - \sqrt{ \frac{ n p (1-p) }{ \varepsilon } }$.
By Chebichev's inequality, applied to the binomial random variable $B_n$ with parameters $n$ and $p$, we have
\begin{align}
   \mathcal{P}( B \le K_\varepsilon ) &= \mathcal{P}\left( B - np \le - \sqrt{ \frac{ n p (1-p) }{ \varepsilon } } \right)   \\
    &\le \mathcal{P}\left( | B - np | \ge  \sqrt{ \frac{ n p (1-p) }{ \varepsilon } } \right)   \\
    &\le \frac{ n p (1-p) }{ \left( \sqrt{ \frac{ n p (1-p) }{ \varepsilon } } \right)^2 }  = \varepsilon
\end{align}

In Appendix \ref{app:Appboundsonmkgamma} it is shown that, for any fixed $\gamma$, $\limsup_{ k \rightarrow \infty }{ m_k(\gamma) } = 0$  and $\max_{ k \ge 0 }{ m_k(\gamma) } = M(\gamma) < \infty$ (see \cref{eq:maximumofmkofgamma}).
There exists an integer $K_\varepsilon^\prime$, such that $m_k(\gamma) < \varepsilon, \, \forall k > K_\varepsilon^\prime$.
Let then choose an integer $N_\varepsilon$, such that, for all  $n \ge N_\varepsilon$, we have $K_\varepsilon = n p  - \sqrt{ \frac{ n p (1-p) }{ \varepsilon } } \ge K_\varepsilon^\prime$.
Then, letting for ease of notation $w_k = \binom{n}{k} p^k (1-p)^{n-k}$, for $k=0,1,\dots,n$, we have
\begin{align*}
   \frac{ U_n(p,\gamma) }{ \log_2(1+\gamma) } &= \sum_{ k = 0 }^{ n }{ m_k(\gamma) w_k }   \\
    &= \sum_{ k \le K_\varepsilon }{ m_k(\gamma) w_k } + \sum_{ k > K_\varepsilon }{ m_k(\gamma)w_k}   \\
    &\le M(\gamma) \mathcal{P}( B \le K_\varepsilon ) +  \varepsilon \mathcal{P}( B > K_\epsilon )  \\
    &\le [ M(\gamma) + 1 ]  \varepsilon
\end{align*}
for all $n > N_\varepsilon$.
This concludes the proof. \qed.

\section{Proof of scaling of $p_n$ and $\gamma_n$}
\label{app:scalingofpandgamma}
Ley us assume that $p_n \sim O(1/n^a)$ and $\gamma_n \sim O(1/n^b)$, with positive $a$ and $b$.

Let us first prove that $a+b > 1$ implies that $\lim_{ n \rightarrow \infty }{ U_n(p_n,\gamma_n ) } = 0$.
An obvious upper bound of $m_k(\gamma)$ for any $\gamma$ is $(1-\epsilon) k$.
Then, we have
\begin{equation}
U_n(p_n,\gamma_n) \le \log_2(1+\gamma_n) n p_n (1-\epsilon) \sim O\left( n^{1-a-b} \right)
\end{equation}
which proves that $U_n(p_n,\gamma_n) \rightarrow 0$ if $a+b > 1$.

Let us now assume $a+b < 1$.
Using the upper bound of $m_k(\gamma)$, we find
\begin{align*}
U_n(p,\gamma) &\le \log_2(1+\gamma) n^2 p  \left( 1 - \frac{ \gamma }{ 1+\gamma } p \right)^{n-1}   \\
  &= \log_2(1+\gamma) n^2 p \, e^{ (n-1) \log\left(  1 - \frac{ \gamma }{ 1+\gamma } p \right) }   \\
  &\sim O\left( n^{2-a-b} \, e^{ - n^{1-a-b} } \right)
\end{align*}
Thes proves that $U_n( p_n , \gamma_n ) \rightarrow 0$ as $n \rightarrow \infty$, in case $a+b < 1$.

\section{Proof of \cref{theo:introVj}}
\label{app:AppA}

We first characterize the order statistics of a set $\{ Y_1,\dots,Y_n \}$ of $n$ \ac{i.i.d.} negative exponential random variables with mean 1.
Let $Y_{(1)},Y_{(2)},\dots,Y_{(n)}$ denote the corresponding order statistics in descending order, i.e., we have $Y_{(1)} \ge Y_{(2)} \ge \dots \ge Y_{(n)}$.
The main difficulty with this set of random variables is that they are not independent.
It is possible however to find a simpler representation for the order statistics, namely, we can write:
\begin{equation}
\label{eq:orderstatrepresentation}
Y_{(k)} = \sum_{ j = k }^{ n }{ \frac{ X_j }{ j } }
\end{equation}
where $X_1,\dots,X_n$ are \ac{i.i.d.} negative exponential random variable with mean 1.

This representation is derived by considering $n$ negative exponential timers, starting off at the same time.
The first timer to expire marks the value of $Y_{(n)}$.
The residual time to expiry for all remaining timers is still distributed according to a negative exponential random variable, thanks to the memoryless property of the negative exponential random variable.
In general, let $T_k$ denote the time it takes for the next timer to go, when $k$ timers are left.
Thanks to the memoryless property, $T_k$ has the same probability distribution as the minimum of $k$ \ac{i.i.d.} negative exponential random variable of mean 1, i.e., $\mathcal{P}( T_k > t ) = e^{- k t }, \, t \ge 0$.
Let $t_0$ be the time when all timers are triggered.
The time $E_j$ when the $j$-th timer expires and only $j-1$ survive is $E_j = t_0 + T_n+\dots+T_{n-j+1}$.
This is the $j$-th smallest value among all timer lifetimes, hence $Y_{(n-j+1)} = E_j - t_0$ for $j = 1,\dots,n$ (remember the order statistics $\{ Y_{(k)} \}_{1 \le k \le n}$ that we consider is in \emph{descending} order).
We conclude then
\begin{equation}
Y_{(k)} = T_n+T_{n-1}+\dots+T_{k+1}+T_k
\end{equation}
where $T_j$ is a negative exponential random variable with mean $1/j$ and the $T_j$'s are independent.
Therefore, we can finally write the representation in \cref{eq:orderstatrepresentation}

Starting from \cref{eq:successfulSIC} and replacing $|\tilde{h}_k|^2$ with the descending order statistics $Y_{(k)}$ of a set of $n$ \ac{i.i.d.} negative exponential random variables with mean 1, we get the following condition for the successful reception of the packet associated with the $m$-th strongest signal:
\begin{equation}
	\label{eq:conditionset}
	Y_{(j)} \ge c + \gamma \sum_{ r = j+1 }^{ n }{ Y_{(r)} } + \gamma \xi \sum_{ r = 1 }^{ j-1 }{ Y_{(r)} } \, , \quad j = 1,\dots,m
\end{equation}

Using \cref{eq:orderstatrepresentation} into \cref{eq:conditionset}, we re-write the $j$-th condition above as follows:
\begin{equation}
\label{eq:SNIR3}
	\sum_{ k = j }^{ n }{ \frac{ X_k } { k } }  \ge c + \gamma \sum_{ r = j+1 }^{ n }{ \sum_{ k = r }^{ n }{ \frac{ X_k }{ k } } } + \gamma \xi \sum_{ r = 1 }^{ j-1 }{ \sum_{ k = r }^{ n }{ \frac{ X_k }{ k } } }
\end{equation}
Swapping the two last double summations, we get
\begin{equation}
	\sum_{ r = j+1 }^{ n }{ \sum_{ k = r }^{ n }{ \frac{ X_k }{ k } } } = \sum_{ k = j+1 }^{ n }{ \sum_{ r = j+1 }^{ k }{ \frac{ X_k }{ k } } } = \sum_{ k = j+1 }^{ n }{ X_k \, \frac{ k-j } { k } } 
\end{equation}
and
\begin{align}
   \sum_{ r = 1 }^{ j-1 }{ \sum_{ k = r }^{ n }{ \frac{ X_k }{ k } } } &= \sum_{ r = 1 }^{ j-1 }{ \left( \sum_{ k = r }^{ j-1 }{ \frac{ X_k }{ k } } + \sum_{ k = j }^{ n }{ \frac{ X_k }{ k } } \right) }  \\
   &= \sum_{ k = 1 }^{ j-1 }{ \sum_{ r = 1 }^{ k }{ \frac{ X_k }{ k } } } + \sum_{ k = j }^{ n }{ \sum_{ r = 1 }^{ j-1 }{ \frac{ X_k }{ k } } }  \\
   &= \sum_{ k = 1 }^{ j-1 }{ X_k } + \sum_{ k = j }^{ n }{ X_k \, \frac{ j-1 }{ k } } 
\end{align}

Plugging those expressions back into \cref{eq:SNIR3}, we find
\begin{equation}
\label{eq:SNIR4}
	\sum_{ k = j }^{ n }{ \frac{ X_k }{ k } }  \ge c + \gamma \sum_{ k = j }^{ n }{ X_k \, \frac{ k-j }{ k } } + \gamma \xi \sum_{ k = 1 }^{ j-1 }{ X_k } + \gamma \xi \sum_{ k = j }^{ n }{ X_k \, \frac{ j-1 }{ k } } 
\end{equation}
that is
\begin{equation}
\label{eq:SNIR5}
	\left[ 1 + j \gamma - (j-1) \gamma \xi \right] \sum_{ k = j }^{ n }{ \frac{ X_k }{ k } }  - \gamma \sum_{ k = j }^{ n }{ X_k } - \gamma \xi \sum_{ k = 1 }^{ j-1 }{ X_k } \ge c 
\end{equation}

Let us define the coefficients
\begin{equation}
\label{eq:bjkexpressionappendix}
b_{jk} = \begin{cases}
     - \gamma \xi & k = 1,\dots,j-1, \\
     \frac{ 1 + j \gamma - (j-1) \gamma \xi }{ k } - \gamma & k = j,\dots,n.
\end{cases}
\end{equation}
for $k = 1,\dots,n$.
Then, we can re-write \cref{eq:SNIR5} as follows
\begin{equation}
\label{eq:definitionofV}
V_j = \sum_{ k = 1 }^{ n }{ b_{jk} X_k } \ge c 
\end{equation}
where we have defined the random variable $V_j$.
The conditions for decoding the $m$ strongest packet successfully can be stated concisely as $V_j \ge c$ for $j =1,\dots,m$.

Since the mean and variance of $X_k$ is equal to 1 for all $k$ and the $X_k$'s are \ac{i.i.d.} random variables, we have
\begin{align}
    &\mu_j(n) = \mathrm{E}[ V_j ] = \sum_{ k = j }^{ n }{ b_{jk} }   \label{eq:muofjnappendix} \\
    &\sigma^2_j(n) = \mathrm{E}[ ( V_j - \mu_j )^2 ] = \sum_{ k = j }^{ n }{ b_{jk}^2 }   \label{eq:varianceofVjasasum}
\end{align}


\section{Proof of \cref{theo:asymptoticpropertiesofmomentsofVj}}
\label{app:AppB}

\subsection{Asymptotic regime for large $n$: mean value of $V_j$}
We recall that we are studying the asymptotic regime as $n \rightarrow \infty$ with $\gamma = 1/( \alpha n )$. 
The expression of the mean of $V_j$ can be derived explicitly from \cref{eq:bjkexpressionappendix,eq:muofjnappendix}:
\begin{equation}
\label{eq:meanofVj}
\mu_j(n) = A_j(n) \left[ H_n - H_{j-1} \right] - B_j(n)
\end{equation}
where
\begin{align}
    A_j(n) &= 1 + \frac{ \xi+(1-\xi) j }{ \alpha n }   \\
    B_j(n) &=   \frac{ n - (1-\xi)(j-1) }{ \alpha n }   \\
    H_n &= \sum_{ k = 1 }^{ n }{ \frac{ 1 }{ k } }
\end{align}
for $j = 1,\dots,n$.

The number $H_n$ is known as the $n$-th harmonic number.
For ease of notation, we let also $H_0 = 0$.
It is well known that $H_n \sim \log(n+1) + E$, where $E$ is the Euler-Mascheroni constant, $E \approx 0.5772$.
As for the difference of harmonic numbers, it is easy to verify that
\begin{equation}
\label{eq:harmonicnumberbounds}
\log\left( \frac{ n }{ j } \right) + \frac{ 1 }{ n } \le H_n - H_{j-1} = \sum_{ k = j }^{ n }{ \frac{ 1 }{ k } } \le \log\left( \frac{ n }{ j } \right) + \frac{ 1 }{ j }
\end{equation}
This is proved by starting from the inequalities $\frac{ 1 }{ k+1 } \le \frac{ 1 }{ x } \le \frac{ 1 }{ k }$, holding for $k \le x \le k+1$ and for any $k \ge 1$.
Then, integrating over $x$ and summing over $k$ leads to the inequalities in \cref{eq:harmonicnumberbounds}

Let us define the function $f_{\alpha,\xi}(x)$ for $x \in (0,1)$, given by
\begin{equation}
f_{\alpha,\xi}(x) = - \left( 1 + \frac{ (1-\xi) x }{ \alpha } \right) \log x - \frac{ 1 - (1-\xi) x }{ \alpha }
\end{equation}

After a long yet simple calculation, it is verified that
\begin{equation}
\label{eq:basicineq}
\frac{ 1 }{ n } + \epsilon_j(n) \le \mu_j(n) - f\left( \frac{ j }{ n } \right) \le  \frac{ 1 }{ j } + \delta_j(n)
\end{equation}
where
\begin{align}
    &\delta_j(n) = \frac{ \xi }{ \alpha n } \left( \log\left( \frac{ n }{ j } \right) + \frac{ 1 }{ j } \right)    \label{eq:deltadefinitions} \\
    &\epsilon_j(n) = \frac{ \xi }{ \alpha n^2 } - \frac{ (1-\xi)(1-j/n) }{ \alpha n } + \frac{ \xi }{ \alpha n } \log\left( \frac{ n }{ j } \right)   \label{eq:epsilondefinition}
\end{align}
for $j = 1,\dots,n$.
Since $\delta_j(n)$ for fixed $n$ is a monotonously decreasing function of $j$ and a monotonously increasing function of $\xi \in [0,1]$, for all $j = 1,\dots,n$ we have
\begin{equation}
\label{eq:boundondelta}
\delta_j(n) \le \xi \frac{ \log n + 1 }{ \alpha n } \le \frac{ \log n + 1 }{ \alpha n }   
\end{equation}

As for $\epsilon_j(n)$, it is a monotonously increasing function of $\xi \in [0,1]$.
Setting $\xi = 1$ in \cref{eq:epsilondefinition}, we get for all $j = 1,\dots,n$:
\begin{equation}
\label{eq:upperboundonepsilon}
\epsilon_j(n)  \le \frac{ \log(n/j) + 1/n }{ \alpha n } \le \frac{ \log n + 1/n }{ \alpha n }
\end{equation}
For $\xi = 0$, and for all $j = 1,\dots,n$, we have also:
\begin{equation}
\label{eq:lowerboundonepsilon}
\epsilon_j(n)  \ge  - \frac{ 1 - j/n }{ \alpha n } \ge - \frac{ 1 }{ \alpha n }
\end{equation}

Summing up, for any fixed $n$ and for all $j = 1,\dots,n$ and $\xi \in [0,1]$, we have
\begin{align}
    &0 \le \delta_j(n) \le \frac{ \log n + 1 }{ \alpha n }   \\
    &| \epsilon_j(n) | \le \max\left\{ \frac{ \log n + 1/n }{ \alpha n } \, , \frac{ 1 }{ \alpha n } \right\} \le \frac{ \log n + 1 }{ \alpha n }
\end{align}

Using the inequalities above in \cref{eq:basicineq}, we can bound the absolute deviation between the sequences of the mean values and of the approximating function:
\begin{align}
\Delta_j(n) &= \left| \mu_j(n) - f\left( \frac{ j }{ n } \right) \right|  \nonumber  \\
  &\le \max\left\{ \frac{ 1 }{ n } + | \epsilon_j(n) |, \frac{ 1 }{ j } + \delta_j(n) \right\}  \nonumber  \\
  &\le \max\left\{ \frac{ 1 }{ n } + \frac{ \log n + 1 }{ \alpha n }, \frac{ 1 }{ j } + \frac{ \log n + 1 }{ \alpha n } \right\}  \nonumber  \\
  &= \frac{ 1 }{ j } + \frac{ \log n + 1 }{ \alpha n }   \label{eq:boundonabsdiff}
\end{align}

We consider now the mean absolute error between the sequences $\{ \mu_j(n)\}_{1 \le j \le n}$ and $\{ f(j/n) \}|_{1 \le j \le n}$
In view of \cref{eq:boundonabsdiff}, we have
\begin{align}
\Delta(n) &= \frac{ 1 }{ n } \sum_{ j = 1 }^{ n }{ \left| \mu_j(n) - f\left( \frac{ j }{ n } \right) \right| }   \nonumber \\
  &\le \frac{ 1 }{ n } \sum_{ j = 1 }^{ n }{ \frac{ 1 }{ j } } + \frac{ \log n + 1 }{ \alpha n } \sim O\left( \frac{ \log n }{ n } \right) 
\end{align}
where the last (asymptotic) equality is a consequence of $\sum_{ j = 1 }^{ n }{ 1/j } = H_n \sim \log n$ for $n \rightarrow \infty$.
This proves that $\lim_{ n \rightarrow \infty }{ \Delta(n) } = 0$, i.e., the sequence of mean values $\mu_j(n)$, $j = 1,\dots,n$, approaches the continuous curve $f_{\alpha,\xi}(x)$, i.e., the points $\mu_j(n)$ move towards the curve $f_{\alpha,\xi}(x), \, x \in (0,1]$ as $n$ grows.
The result proved above determines also the rate of convergence, which appears to be rather slow.


\subsection{Asymptotic regime for large $n$: variance of $V_j$}
From \cref{eq:varianceofVjasasum}, the expression of the variance of $V_j$ is as follows:
\begin{equation}
\label{eq:varianceofVj}
\sigma^2_j(n) = a^2_j(n) \sum_{ k = j }^{ n }{ \frac{ 1 }{ k^2 } } - 2 \frac{ a_j(n) }{ \alpha n } \sum_{ k = j }^{ n }{ \frac{ 1 }{ k } } + \frac{ n - j + (j-1) \xi^2 }{ \alpha^2 n^2 }
\end{equation}
where
\begin{equation}
    a_j(n) = 1 + \frac{ \xi + (1-\xi) j }{ \alpha n }
\end{equation}

We distinguish two cases:
\begin{enumerate}
  \item Fixed $j$, so that the fraction $j/n$ becomes vanishingly small as $n$ grows.
  \item $j = n x$, with fixed $x$, so that the fraction $j/n$ is fixed and positive as $n$ grows.
\end{enumerate}

In the first case, for fixed $j$ we have for $n \rightarrow \infty$:
\begin{align*}
    &a_j(n) \rightarrow 1   \\
    &\sum_{ k = j }^{ n }{ \frac{ 1 }{ k^2 } } \rightarrow \frac{ \pi^2 }{ 6 } - \sum_{ k = 1 }^{ j-1 }{ \frac{ 1 }{ k^2 } }   \\
    &\sum_{ k = j }^{ n }{ \frac{ 1 }{ k } }  = H_n - H_{j-1} \sim \log\left( \frac{ n }{ j } \right)   \\
    &\frac{ n - j + (j-1) \xi^2 }{ \alpha^2 n^2 } \rightarrow 0
\end{align*}
where the second inequality stems from the known series sum $\sum_{ k = 1 }^{ \infty }{ 1/k^2 } = \pi^2/6$.
Using these limits in \cref{eq:varianceofVj}, we have for any given fixed $j$:

\begin{equation}
\lim_{ n \rightarrow \infty }{ \sigma^2_j(n) } = \frac{ \pi^2 }{ 6 } - \sum_{ k = 1 }^{ j-1 }{ \frac{ 1 }{ k^2 } }
\end{equation}

On the other hand, from \cref{eq:meanofVj}, it is apparent that for any given fixed $j$ we have $\lim_{ n \rightarrow \infty }{ \mu_j(n) } = \infty$.
Hence, for any given fixed $j$, we have
\begin{equation}
\lim_{ n \rightarrow \infty }{ \frac{ \sigma_j(n) }{ \mu_j(n) } } = 0
\end{equation}

In the second case, when the fraction $j/n = x$ is fixed as $n$ grows, we have

\begin{align*}
\sigma^2_j(n) &\le \left( 1 + \frac{ j }{ \alpha n } \right)^2 \sum_{ k = j }^{ n }{ \frac{ 1 }{ k^2 } } + \frac{ n - 1 }{ \alpha^2 n^2 }   \\
  &\le \left( 1 + \frac{ j }{ \alpha n } \right)^2 \frac{ n-j }{ j^2 } + \frac{ 1 }{ \alpha^2 n }    \\
  &= \left( 1 + \frac{ x }{ \alpha } \right)^2 \frac{ n - n x }{ n^2 x^2 } + \frac{ 1 }{ \alpha^2 n }   \\
  &= \left[  \frac{ 1-x }{ x^2 } \left( 1 + \frac{ x }{ \alpha } \right)^2 + \frac{ 1 }{ \alpha^2 } \right] \frac{ 1 }{ n }   
\end{align*}
The upper bound in the first line is obtained as a consequence of the following facts.
\begin{enumerate}
  \item The coefficients $a_j(n)$ are decreasing functions of $\xi \in [0,1]$ for all $j$, hence $a_j(n) \le a_j(n) |_{ \xi = 0 } = 1 + j/( \alpha n )$.
  \item The second term on the right hand side of \cref{eq:varianceofVj} is negative.
  \item The last term is a monotonously increasing function of $\xi$, hence it is maximized for $\xi = 1$.
\end{enumerate}
The expression in the third line is obtained by replacing $j$ with $n x$.

This proves that, for any fixed $x$, we have
\begin{equation}
\lim_{ n \rightarrow \infty }{ \sigma^2_j(n)|_{ j/n = x } } = 0
\end{equation}

As for the mean, substituting $j$ with $n x$ in \cref{eq:meanofVj}, and taking the limit for $n \rightarrow \infty$ with fixed $x > 0$, it is not difficult to verify that
\begin{equation}
\lim_{ n \rightarrow \infty }{ \mu_j(n)|_{ j/n = x } } = f_{\alpha,\xi}(x)
\end{equation}
where $f_{\alpha,\xi}(x)$ is defined in \cref{eq:definitionfofx}.

Putting together the limits of variance and mean of $V_j$ for $j/n = x$ for fixed $x > 0$, we get:
\begin{equation}
\lim_{ n \rightarrow \infty }{ \left.\frac{ \sigma_j(n) }{ \mu_j(n) } \right|_{ j/n = x } } = 0
\end{equation}

Applying Chebichev inequality, we can write
\begin{equation}
\mathcal{P}\left( \left| \frac{ V_j }{ \mu_j(n) } - 1 \right| > \epsilon \right) \le \frac{ \sigma^2_j(n) }{ \mu^2_j(n) \epsilon^2 }
\end{equation}
Taking the limit for $n \rightarrow \infty$, we have for any given $\epsilon > 0$:
\begin{equation}
\lim_{ n \rightarrow \infty }{ \mathcal{P}\left( \left| \frac{ V_j }{ \mu_j(n) } - 1 \right| > \epsilon \right) } = 0
\end{equation}
which proves that $V_j \sim \mu_j(n)$ almost surely for large $n$, i.e., $V_j$ tends to become a deterministic variable, equal to its mean, with its coefficient of variation (the ratio of the standard deviation to the mean) shrinking to 0.

\section{Proof of \cref{theo:mainresult}}
\label{app:AppC}
Note that $\lim_{ x \rightarrow 0+ }{ f_{\alpha,\xi}(x)  } = +\infty$ and $f_{\alpha,\xi}(1) = -\xi/\alpha \le 0$.
Hence, the set $\left\{ x : x \in (0,1) \text{ and } f_{\alpha,\xi}(x) = c \right\}$ is nonempty for any positive $c$.

As $n$ gets large, the sequence of random variables $V_j$ for $j = 1,\dots,n$ tend to a series of deterministic random variables equal to $\mu_j(n) \sim f_{\alpha,\xi}(j/n)$.

On the other hand, the condition for the packet carried by the $m$-th strongest signal to be decoded successfully is that $V_j \ge c$ for $j = 1,\dots,m$.
Asymptotically, this happens if $\mu_j(n) \ge c$, i.e., $f_{\alpha,\xi}(j/n) \ge c$ for $j = 1,\dots,m$.

The overall number of correctly decoded packets will be $m$ if and only if $f_{\alpha,\xi}(j/n) \ge c$, for $j = 1,\dots,m$, and $f_{\alpha,\xi}((m+1)/n) < c$.
Reminding the definition of $\zeta$, in the limit for $n \rightarrow \infty$, it is recognized that $m/n \sim \zeta$.
Hence, the average number of correctly decoded packets out of $n$ transmitted behaves as $m_n(\gamma_n) \sim \zeta n$ as $n$ tends to infinity.

\section*{Acknowledgment}
This work was partially supported by the European Union under the Italian National Recovery and Resilience Plan (NRRP) of Next Generation EU, partnership on ``Telecommunications of the Future'' (PE00000001 - program ``RESTART'').


\end{document}